\def\E{\mathbb{E}}
\newcommand{\bl}{\boldsymbol{\lambda}}
\newcommand{\blambda}{\boldsymbol{\lambda}}
\newcommand{\dist}{\mathrm{dist}}
\newcommand{\one}{\mathbf{1}}
\newcommand{\bY}{\mathbf{Y}}
\newcommand{\bz}{\mathbf{z}}
\def\E{\mathbb{E}}
\def\R{\mathbb{R}}
\def\C{\mathbb{C}}
\def\Z{\mathbb{Z}}
\def\eps{\varepsilon}
\def\del{\delta}
\def\cA{\mathcal{A}}
\def\brho{\boldsymbol {\rho}}
\def\1{\mathbf{1}}
\def\lam {\lambda}
\def\Lam{\Lambda}
\def\tce{t_c + \eps}
\def\tce2{t_c + \frac{\eps}{2}}
\def\bX{\mathbf{X}}
\newtheorem*{theorem*}{Theorem}
\newtheorem{theorem}{Theorem}
\newtheorem{lemma}[theorem]{Lemma}
\newtheorem{cor}[theorem]{Corollary}
\newtheorem{defn}[theorem]{Definition}
\newtheorem*{defn*}{Definition}
\newtheorem*{prop*}{Proposition}
\newtheorem*{conj*}{Conjecture}
\newtheorem*{fact*}{Fact}
\newtheorem{fact}[theorem]{Fact}
\theoremstyle{remark}
\newtheorem*{remark*}{Remark}
\begin{document}
	\title{Analyticity for classical gasses  via recursion}
	\author{Marcus Michelen}
	\address{Department of Mathematics, Statistics, and Computer Science\\ University of Illinois at Chicago}
	\author{Will Perkins}
	\address{School of Computer Science\\ Georgia Institute of Technology}
	\email{michelen.math@gmail.com \\ math@willperkins.org}
	\date{\today}
	
	\begin{abstract}
		We give a new criterion for a classical gas with a repulsive pair potential to exhibit uniqueness of the infinite volume Gibbs measure and analyticity of the pressure.   Our improvement on the bound for analyticity is by a factor $e^2$ over the classical cluster expansion approach and a factor $e$ over the known limit of cluster expansion convergence.    The criterion is based on a contractive property of a recursive computation of the density of a point process.  The key ingredients in our proofs include an integral identity for the density of a Gibbs point process and an adaptation of the algorithmic  correlation decay method from theoretical computer science.  We also deduce from our results an improved bound for analyticity of the pressure as a function of the density.   
	\end{abstract}
	
	\maketitle

	\section{Introduction}
	\label{secIntro}

	A central goal in classical statistical mechanics is to derive macroscopic properties of  gasses, liquids, and solids given only details of microscopic interactions.  The classical model involves indistinguishable particles interacting in the continuum via a many-body potential.  Even in the case of a pair potential fundamental questions remain open, such as whether or not such a model exhibits a phase transition from a gaseous to a solid state.  One major success of the field is in the study of convergent expansions for thermodynamic quantities in powers of the activity or density; using these techniques, the model is proved to be in a gaseous state throughout the region of convergence of these series.  However, convergence can be limited by the presence of non-physical singularities in the complex plane.  
	Here, in the case of repulsive pair potentials, we extend the known range of the gaseous phase using analytic techniques that avoid such singularities and focus on small complex neighborhoods of the positive real axis.

	\subsection{The model}

	We consider classical particles interacting in a finite volume $\Lam \subset \R^d$ via a symmetric, translation invariant pair potential $\phi$.   
	
	The energy of a configuration $\{x_1, \dots, x_n \} \subset \R^d$ is given by
	\begin{align*}
	U(x_1, \dots , x_n ) &= \sum_{1 \le i < j \le n } \phi(x_i - x_j) \,.
	\end{align*}
	We make the following assumptions on the potential $\phi : \R^d \to \mathbb R \cup \{ +\infty \}$:
	\begin{enumerate}
		\item $\phi$ is \textit{repulsive}: $\phi(x) \ge 0$ for all $x$.
		\item $\phi$ is \textit{tempered}:  $\int_{\R^d}  |1- e^{- \phi(x)}| \, dx < \infty$.
	\end{enumerate}
	
	The grand canonical partition function at activity $\lam \ge0$ on a bounded region $\Lam \subset \R^d$ is given by:
	\begin{equation}
	\label{eqGPPParition}
	Z_\Lam( \lam) = 1+ \sum_{ k \ge 1} \frac{\lam ^k}{k!} \int_\Lam \cdots \int_\Lam e^{- U(x_1, \dots x_k)}   \,dx_1 \cdots \, dx_k \,. 
	\end{equation}
	(Without losing any generality, we absorb the usual inverse temperature parameter $\beta$ into the potential $\phi$).

	If we let $\Lam_V$ be the axis-parallel box of volume $V$ centered at the origin in $\R^d$, then the infinite volume pressure is 
	\begin{align*}
	p_\phi(\lam) = \lim_{V \to \infty} \frac{1}{V} \log Z_{\Lam_V}( \lam)  \,.
	\end{align*}
	See e.g.~\cite{ruelle1999statistical} for a proof of the existence of the limit.  Non-analytic points of $p_\phi(\lam)$ on the positive real axis mark phase transitions of the infinite volume system.  An important topic in classical statistical physics is proving the absence of phase transitions for certain  parameter values; that is, proving analyticity of the pressure.  
	We remark that proving the existence of a phase transition in the models considered here is a notoriously challenging problem; to the best of our knowledge it is not proved for any monatomic gas interacting through a finite-range or rapidly decaying pair potential (see~\cite{widom1970new,lebowitz1999liquid} for proofs of phase transitions in continuous particle systems with different types of interactions).    For more background on classical gasses see~\cite{ruelle1999statistical}.

	\subsection{Main results}
	
	Our main result is a new criterion for analyticity of the pressure and uniqueness of the infinite volume Gibbs measure.  
	\begin{theorem}
		\label{thmAnalytic}
		Let $\phi$ be a repulsive, tempered potential and let 
		\begin{equation}
		\label{eqCPhiRepulsive}
		C_{\phi} = \int_{\R^d}  \left |1- e^{- \phi(x)}  \right|   \, dx  \,.
		\end{equation}
		Then for $\lam \in [0, e/ C_{\phi})$ the infinite volume pressure $p_\phi(\lam)$ is analytic and there is a unique infinite volume Gibbs measure.
	\end{theorem}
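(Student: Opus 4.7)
The plan is to prove Theorem~\ref{thmAnalytic} by deriving a recursive integral identity for the one-point density $\rho_\Lam(x;\lam)$ of the Gibbs point process on a bounded region $\Lam$, and then showing that this recursion is contractive in a suitable norm throughout a complex neighborhood of $[0,e/C_\phi)$. This forces $\rho_\Lam(x;\lam)$ to extend analytically to that neighborhood, uniformly in $\Lam$, from which analyticity of $p_\phi$ and uniqueness of the infinite-volume Gibbs measure both follow.

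The first step is to establish an identity of the form
\begin{equation*}
\rho_\Lam(x;\lam) \;=\; \lam \, \exp\!\left(-\int_\Lam \bigl(1-e^{-\phi(x-y)}\bigr)\,\widetilde{\rho}(y)\,dy\right),
\end{equation*}
where $\widetilde{\rho}(y)$ is the density at $y$ in a \emph{modified} Gibbs system (for instance, obtained by appropriately tilting the activity along a path in activity-space). I would derive this by introducing a spatially varying activity $\lam(\cdot)$, writing $\rho_\Lam(x;\lam(\cdot))$ as the derivative in a ``turn-on'' parameter that smoothly activates the test point, and using that pinning a particle at $x$ modulates the local activity at $y$ by the factor $e^{-\phi(x-y)}$. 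The expression $1-e^{-\phi(x-y)}$ on the right is precisely what integrates to $C_\phi$, so the global prefactor in the recursion is $\lam C_\phi$.

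The second step is the contraction analysis, in the spirit of Weitz's algorithmic correlation-decay argument for the hard-core model but with integrals over $\R^d$ replacing sums over graph neighborhoods. The self-map $\rho \mapsto \lam \exp(-T\rho)$, where $T$ is a nonnegative linear operator of norm $\leq C_\phi$, has a real fixed point bounded by $\lam$, and its complex Jacobian equals $-\rho \cdot T$. A direct bound would only yield contraction for $|\lam| C_\phi < 1$; the sharp threshold $|\lam| C_\phi < e$ comes from choosing the right potential function on the state variable (essentially $\log \rho$) and exploiting $\sup_{x \geq 0} x e^{-x} = 1/e$ to optimize the derivative bound along the fixed-point curve. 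This is the source of the factor $e$ improvement over Fern\'andez--Procacci.

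The third step is to pass from contractive recursion to analyticity. Uniform complex contraction in a neighborhood $U$ of $[0,e/C_\phi)$ produces a unique attracting fixed point $\rho_\Lam(x;\lam)$ that is analytic in $\lam \in U$, with bounds independent of $\Lam$. Since $\partial_\lam \log Z_\Lam(\lam) = \lam^{-1}\int_\Lam \rho_\Lam(x;\lam)\,dx$, integrating along the real axis from $0$ yields analyticity of $\tfrac{1}{|\Lam|}\log Z_\Lam$, and Vitali's theorem together with the known real-analytic limit gives analyticity of $p_\phi$. Uniqueness of the infinite-volume Gibbs measure comes from the same contraction: the discrepancy between densities under any two boundary conditions is transported by the recursion with the same contractive constant, forcing decay of correlations as $\dist(x,\partial\Lam) \to \infty$.

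The main obstacle I expect is the first step: writing down a recursive identity whose inner object $\widetilde{\rho}$ is itself the density of a Gibbs-type system to which the recursion can be applied again, and doing so without losing the sharp constant. A secondary technical issue is propagating complex contractivity uniformly in $\Lam$ up to the limit, since the recursive integral operator depends on $\Lam$ and $\phi$ may have long tails controlled only through the tempered condition.
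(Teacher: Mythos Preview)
Your plan matches the paper's approach closely: a recursive integral identity for the density (Theorem~\ref{thmMainIdentitiy} in the paper), a complex contraction after a change of variables (Lemma~\ref{lemContract}), and Vitali for analyticity. Two points, however, are genuine gaps rather than details to be filled in.

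First, the potential you propose, ``essentially $\log\rho$'', does not yield a contraction on the needed range once $\lambda>1/C_\phi$. In coordinates $u=\log\rho$ the scalar recursion is $u\mapsto \log\lambda - C_\phi e^{u}$ with derivative $-C_\phi e^{u}=-C_\phi\rho$; since the orbit visits values of $\rho$ up to $\lambda$, the derivative modulus reaches $C_\phi\lambda>1$. The fixed point is still locally attractive (there $C_\phi\rho^\ast<1$ whenever $\lambda<e/C_\phi$), but the complex argument requires an invariant region on which the map is a uniform strict contraction, not merely local attractiveness. The paper's choice is $\psi(\rho)=\log(1+C_\phi\rho)$, giving $g_\lambda(z)=\log\bigl(1+C_\phi\lambda e\cdot e^{-e^{z}}\bigr)$, and one checks $|g'_{e/C_\phi}(z)|\le 1$ for all $z\in[0,\log(1+e)]$ with equality only at $e^{z}=2$. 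This specific reparametrization is what produces the $e/C_\phi$ threshold; your $\sup_{x\ge 0} xe^{-x}=1/e$ heuristic does not locate it.

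Second, the recursion is not a self-map on a single density function: the right-hand side of the identity evaluates the density at $w$ under a \emph{different} activity $\bl_{v\to w}$ that depends on both $v$ and $w$, so there is no single fixed point to iterate toward in the way you describe. The paper works instead with the hereditary family $\mathcal{A}_{\bl}=\{\alpha\bl:\alpha\colon\R^d\to[0,1]\text{ measurable}\}$ and a continuity bootstrap (Theorem~\ref{thmZeroFreeProof}): the set of $t\in[0,1]$ for which the density bound holds for every activity in $\mathcal{A}_{t\bl}$ is shown to be open and closed, hence all of $[0,1]$. For uniqueness the paper also does not run your proposed correlation-decay argument; it uses Vitali together with the identity theorem to extend uniqueness from the small-$\lambda$ regime where it is already known, which sidesteps the long-range tail issue you flag at the end.
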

	
	This improves by a factor $e^2$ the classical bound of $1/(e C_{\phi})$ obtained by Groeneveld in 1962 using the cluster expansion~\cite{groeneveld1962two}.  Extensions by Penrose~\cite{penrose1963convergence} and Ruelle~\cite{ruelle1963correlation} a year later to a wider class of potentials via the Kirkwood--Salsburg equations  match Groeneveld's bound in the case of repulsive potentials.  In a remarkable but little noticed paper, Meeron~\cite{meeron1970bounds} proved uniqueness and analyticity for $\lam< 1/C_{\phi}$ using a novel interpolation scheme~\cite{meeron1962indirect} and recurrence relations for $k$-point densities related to the Kirkwood--Salsburg equations.   The bound $1/C_{\phi}$ has since been matched by other methods, including Dobrushin uniqueness (for potentials with a hard-core)~\cite{houdebert2020explicit}  and disagreement percolation~\cite{betsch2021uniqueness}.
	
New criteria for cluster expansion convergence have been given by Faris~\cite{faris2008connected}, Jansen~\cite{jansen2019cluster}, and Nguyen and Fern{\'a}ndez~\cite{nguyen2020convergence}  based on the work of Fern{\'a}ndez and Procacci in discrete setting~\cite{fernandez2007cluster} and the extension by Fern{\'a}ndez, Procacci, and Scoppola to hard spheres~\cite{fernandez2007analyticity}.  Explicit improvements to the classical bounds from the new criteria  have been worked out in the case of hard spheres in dimension $2$ (see Section~\ref{secHardSphere} below).   
	
	Moreover, $1/ C_\phi$ is an upper bound on the radius of convergence of the cluster expansion for repulsive potentials~\cite{groeneveld1962two,penrose1963convergence} (see remark 3.7 in~\cite{nguyen2020convergence}).  Theorem~\ref{thmAnalytic} surpasses this limit (and Meeron's bound) by a factor $e$.

	Using Theorem~\ref{thmAnalytic} we can also deduce that the pressure is analytic with respect to the infinite volume density defined by 
	$$\rho_{\phi}(\lambda) := \lam \frac{d}{d \lambda} p_{\phi}(\lambda)\,.$$
	\begin{theorem}
		\label{corDensity}
		For any repulsive, tempered potential $\phi$, the infinite volume pressure $p_{\phi}$ is analytic as a function of the density $\rho_{\phi}$ for $ \rho_{\phi}  \in [0,\frac{e}{1+e} \frac{1}{C_{\phi}} )$.  
	\end{theorem}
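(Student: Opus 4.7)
The plan is to invert the map $\lambda \mapsto \rho_\phi(\lambda)$ analytically and then compose with $p_\phi(\lambda)$. By Theorem~\ref{thmAnalytic}, $p_\phi$ extends as an analytic function to some complex neighborhood of the real interval $[0, e/C_\phi)$, hence so does $\rho_\phi(\lambda) = \lambda p'_\phi(\lambda)$. On this real interval, $\rho_\phi(0) = 0$ and $\rho_\phi$ is continuous. First I would verify that $\rho_\phi$ is strictly increasing there (equivalently, that $\rho'_\phi > 0$); this positive-compressibility statement should follow either from strict convexity of $p_\phi$ in $\log\lambda$ or directly from the recursive identities used to prove Theorem~\ref{thmAnalytic}.

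The key quantitative input is the lower bound
\[
\rho_\phi(\lambda) \;\ge\; \frac{\lambda}{1 + \lambda C_\phi} \qquad\text{for every } \lambda \in [0, e/C_\phi).
\]
To prove it, I would invoke the Nguyen--Zessin identity for the unique infinite-volume Gibbs measure guaranteed by Theorem~\ref{thmAnalytic}:
\[
\rho_\phi(\lambda) \;=\; \lambda\, \E\Bigl[\prod_{y \in \eta}\bigl(1 - f(x - y)\bigr)\Bigr],
\]
where $f(z) := 1 - e^{-\phi(z)} \in [0,1]$, $x$ is any fixed point, and $\eta$ is a sample from the Gibbs point process. Combining the elementary pointwise inequality $\prod_i(1 - f_i) \ge 1 - \sum_i f_i$ for $f_i \in [0,1]$ (easily verified by induction) with Campbell's formula $\E\bigl[\sum_{y \in \eta} f(x - y)\bigr] = \rho_\phi\, C_\phi$ (using translation invariance) yields the self-consistent estimate $\rho_\phi \ge \lambda(1 - \rho_\phi C_\phi)$, which rearranges to the claimed bound. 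Letting $\lambda \to (e/C_\phi)^-$ gives $\limsup_{\lambda} \rho_\phi(\lambda) \ge e/((1+e) C_\phi)$, so the image of $[0, e/C_\phi)$ under the continuous monotone $\rho_\phi$ contains $[0, e/((1+e) C_\phi))$.

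The remainder is routine. Strict positivity of $\rho'_\phi$ on the real interval extends by continuity to a complex neighborhood, so the analytic inverse function theorem supplies a local analytic inverse at each real point of the image; global monotonicity on the reals lets these be assembled into a single analytic function $\lambda = \lambda(\rho)$ on $[0, e/((1+e) C_\phi))$. Composing, $p_\phi(\lambda(\rho))$ is analytic on the claimed interval, which is the conclusion of Theorem~\ref{corDensity}.

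The main obstacle is the lower bound on $\rho_\phi$: while the Bonferroni-type inequality and Campbell's formula are standard, their application rests on the Nguyen--Zessin representation, which in turn requires existence and uniqueness of the infinite-volume Gibbs measure provided by Theorem~\ref{thmAnalytic}. A secondary point, expected to follow from the recursive contraction framework of the paper, is the strict positivity of $\rho'_\phi$ needed to invoke the inverse function theorem.
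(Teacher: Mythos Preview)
Your proposal is correct and follows essentially the same route as the paper: invoke the analytic inverse function theorem via $\rho_\phi'>0$, and use the self-consistent lower bound $\rho_\phi(\lambda)\ge \lambda/(1+\lambda C_\phi)$ obtained from the Bonferroni inequality $\prod(1-f_i)\ge 1-\sum f_i$ to identify the image interval. The only differences in execution are that the paper proves the density lower bound in finite volume (its Lemma~\ref{lemDensityLB}) and then passes to the limit rather than invoking the GNZ identity in infinite volume, and it resolves the point you leave open---strict positivity of $\rho_\phi'$, which does not follow from convexity alone after taking the thermodynamic limit---by citing Ginibre's uniform lower bound on the finite-volume compressibility.
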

	Previous results showed analyticity of the pressure as a function of the density by showing convergence of the virial series in a disk around $0$ in the complex plane~\cite{lebowitz1964convergence,pulvirenti2012cluster,jansen2019virial,nguyen2020convergence}.  The best bound on convergence of the virial expansion for general repulsive potentials is that of Groeneveld~\cite{groeneveld1967estimation}, showing convergence for $|\rho_{\phi}| \le .237961 \frac{1}{C_{\phi}}$ (see also~\cite{ramawadh2015virial}).

	\subsubsection{Example: the hard sphere model}
	\label{secHardSphere}
	One of the most studied classical gasses is the hard sphere model, with the potential $\phi(x) = +\infty$ if $\|x \| < r $ and $0$ otherwise.  A configuration $(x_1, \dots, x_k)$ with $U(x_1, \dots, x_k) =0$ represents the centers of a packing of $k$ spheres of radius $r/2$.  This model provides a good testing ground for different criteria for uniqueness and analyticity.   
	
	For convenience, let us take $r$ to be the radius of the ball of volume $1$ in $\R^d$.  Then the classical Groeneveld--Penrose--Ruelle bound gives convergence of the cluster expansion and  uniqueness for $\lam < 1/e \approx .3679$.   Fern\'andez, Procacci, and Scoppola~\cite{fernandez2007analyticity} gave an improved bound in terms of multi-dimensional integrals.  For $d=2$ their improved bound is $\lam < .5017$.  
	
	Hofer-Temmel~\cite{christoph2019disagreement} and Dereudre~\cite{dereudre2019introduction} gave a new bound for uniqueness and exponential decay of correlations for the hard sphere model based on disagreement percolation and the critical activity for Poisson-Boolean percolation.  In high dimensions this gives uniqueness for $\lam < 1 +o_d(1)$.  In dimension $2$, using the `high-confidence' bound for Poisson-Boolean percolation from~\cite{balister2005continuum}  using Monte-Carlo integration, this approach gives uniqueness for  $\lam <4.508$, a significant improvement over the cluster expansion approach.   The best-known rigorous bound for the critical value in Poisson-Boolean percolation appears to be that given in~\cite[Theorem 3.10]{meester1996continuum} translating to uniqueness for $\lam < 4\pi \cdot .174 \approx 2.187$. This approach also works for repulsive potentials with finite range~\cite{dereudre2019introduction,benevs2019decorrelation}.  Subsequent work of Betsch and Last  extended the disagreement percolation method to repulsive potentials with infinite range~\cite{betsch2021uniqueness}; the bound obtained gives uniqueness for $\lam < 1/C_{\phi}$, with better bounds possible in specific cases based on continuum percolation bounds. 
	
	Recently, Helmuth, Petti, and the second author~\cite{helmuth2020correlation} improved these bounds in dimensions $d \ge 3$  by showing uniqueness and exponential decay of correlations in the hard sphere model for $\lam < 2$ using path coupling to show rapid mixing of a discrete-time Markov chain. 
	
	Theorem~\ref{thmAnalytic} implies a further improvement to these bounds.
	\begin{cor}
		\label{corHardSphere}
		Consider the hard sphere model with spheres of radius $r/2$ where $r$ is the radius of the ball of volume $1$ in $\R^d$.  Then for $\lam < e$ the pressure is analytic and there is a unique infinite-volume Gibbs measure.
	\end{cor}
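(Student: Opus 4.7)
The plan is to apply Theorem~\ref{thmAnalytic} directly, so the entire proof reduces to verifying the hypotheses for the hard sphere potential and then computing the constant $C_\phi$ explicitly under the stated normalization.

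First I would check the two assumptions on $\phi$. Repulsiveness is immediate since $\phi(x) \in \{0, +\infty\}$, so $\phi(x) \ge 0$ everywhere. For temperedness, note that $1 - e^{-\phi(x)} = \mathbf{1}[\|x\| < r]$, so
\begin{equation*}
\int_{\R^d} |1 - e^{-\phi(x)}|\,dx = \vol(\{x : \|x\| < r\}),
\end{equation*}
which is finite. Hence $\phi$ satisfies the hypotheses of Theorem~\ref{thmAnalytic}.

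Next I would compute $C_\phi$ using the same identity: since $1 - e^{-\phi(x)}$ is exactly the indicator of the ball of radius $r$ centered at the origin,
\begin{equation*}
C_\phi = \int_{\R^d} \bigl(1 - e^{-\phi(x)}\bigr)\,dx = \vol(B(0, r)).
\end{equation*}
By the choice of $r$ as the radius of the unit-volume ball in $\R^d$, this volume equals $1$, so $C_\phi = 1$.

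Finally I would invoke Theorem~\ref{thmAnalytic}, which guarantees analyticity of $p_\phi(\lam)$ and uniqueness of the infinite volume Gibbs measure for all $\lam \in [0, e/C_\phi) = [0, e)$. That is exactly the claim of Corollary~\ref{corHardSphere}. There is no real obstacle here: the result is an immediate specialization of Theorem~\ref{thmAnalytic}, and the only step worth a line of justification is the observation that the normalization $\vol(B(0,r)) = 1$ makes the bound clean.
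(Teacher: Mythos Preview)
Your proposal is correct and is exactly the intended argument: the paper states Corollary~\ref{corHardSphere} as an immediate consequence of Theorem~\ref{thmAnalytic} without giving a separate proof, and your computation that $C_\phi = \vol(B(0,r)) = 1$ under the chosen normalization is the only thing to check.
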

	
	Theorem~\ref{corDensity} also gives an improvement on the bounds for analyticity of the pressure as a function of the density of the hard sphere model.  We write these bounds in terms of the packing density of the model, which given our choice of normalization is $2^{-d} \rho_{\phi}(\lam)$ since the volume of the ball of radius $r/2$ is $2^{-d}$. 
	
	\begin{cor}
		\label{corHSdensity}
		The infinite volume pressure of the hard sphere model in dimension $d$ is analytic as a function of the density for packing densities up to $ \frac{e}{1+e} 2^{-d}$.  Moreover, for any $\eps>0$ and $d$ large enough, the pressure of the hard sphere model in dimension $d$ is analytic for packing densities up to $(1-\eps) 2^{-d}$.  
	\end{cor}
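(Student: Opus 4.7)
My plan is to treat the two statements separately. The first is a direct application of Theorem~\ref{corDensity}: for hard spheres with the chosen normalization,
$$
C_\phi = \int_{\R^d}\bigl(1 - e^{-\phi(x)}\bigr)\,dx = \vol\bigl(B(0,r)\bigr) = 1,
$$
and since the packing density is $2^{-d}\rho_\phi$, the bound $\rho_\phi < e/(1+e)$ from Theorem~\ref{corDensity} translates immediately into packing density $<\frac{e}{1+e}2^{-d}$.

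For the second statement, Corollary~\ref{corHardSphere} gives analyticity of $p_\phi(\lam)$ for $\lam \in [0,e)$. Since $\rho_\phi(\lam) = \lam p_\phi'(\lam)$ is analytic in $\lam$ on this interval and strictly increasing (its derivative is the infinite-volume particle-number variance divided by $\lam$, which is strictly positive), analyticity in the density extends up to $\rho^* := \sup_{\lam<e}\rho_\phi(\lam)$. It therefore suffices to show $\rho^* \ge 1-\eps$ for $d$ sufficiently large.

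To bound $\rho^*$ from below in high dimensions, I would use the integral identity for the density emphasized in the paper's abstract: for hard spheres the Mecke equation specializes to
$$
\rho_\phi(\lam) = \lam \cdot \P_\mu\bigl(\eta \cap B(0,r) = \emptyset\bigr),
$$
where $\mu$ is the infinite-volume Gibbs measure and $\eta$ is a sample. In high dimensions, two uniformly chosen points of $B(0,r)$ lie at distance $\approx r\sqrt{2} > r$ with high probability, so pairwise exclusion inside $B(0,r)$ is rarely active, and one expects the hole probability to concentrate near its Poisson value $e^{-\rho_\phi}$. Making this quantitative---so that
$$
\P_\mu\bigl(\eta \cap B(0,r) = \emptyset\bigr) \;\ge\; (1 - o_d(1))\, e^{-\rho_\phi(\lam)}
$$
uniformly for $\lam \in [0, e - \delta]$ for any fixed $\delta > 0$---combines with the identity to yield $\rho_\phi(\lam) \ge (1-o_d(1))\,\lam\, e^{-\rho_\phi(\lam)}$. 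This forces $\rho_\phi(\lam)$ to within $o_d(1)$ of the Lambert-$W$ solution of $\rho = \lam e^{-\rho}$, which tends to $1$ as $\lam \to e^-$; choosing $\lam^*$ close to $e$ then gives $\rho_\phi(\lam^*) \ge 1-\eps$ for all $d$ large.

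The main obstacle is the quantitative lower bound on the hole probability. Negative association of the hard-sphere process delivers the matching upper bound $\P(N=0) \le e^{-\E[N]}$ essentially for free, but the lower bound requires genuine control of higher-order correlations. I would approach this by estimating the factorial moments $\E[(N)_k]$ of $N = |\eta \cap B(0,r)|$ using the recursive $k$-point density bounds developed in the proof of Theorem~\ref{thmAnalytic}, exploiting that in high dimensions the configuration integral $\int_{B(0,r)^k}\prod_{i<j}\mathbf{1}[|x_i - x_j|\ge r]\,d\mathbf{x}$ approaches $1$ rapidly for each fixed $k$, so that the recursive moment estimates effectively decouple into the mean-field equation $\rho = \lam e^{-\rho}$.
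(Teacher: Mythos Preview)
Your treatment of the first statement is correct and matches the paper exactly: with the chosen normalization $C_\phi=1$, Theorem~\ref{corDensity} gives analyticity for $\rho_\phi<e/(1+e)$, hence packing density $<\frac{e}{1+e}2^{-d}$.

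For the second statement, you have correctly identified both the overall architecture (replace the crude density lower bound $\lam/(1+\lam C_\phi)$ from Lemma~\ref{lemDensityLB} by a dimension-dependent one that tends to $1$ as $\lam\to e^-$) and the precise target (the Lambert-$W$ relation $\rho\approx W(\lam)$). The paper, however, does not derive this bound: it simply \emph{cites} \cite[Theorem~19]{helmuth2020correlation}, which in turn rests on \cite{jenssen2019hard}, giving packing density at least $(1+o_d(1))\,W(\lam)\,2^{-d}$ and hence the corollary since $W(e)=1$.

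Your proposed derivation has a genuine gap at exactly the point you flag. Controlling $\P_\mu(\eta\cap B(0,r)=\emptyset)$ from below via the factorial moments of $N=|\eta\cap B(0,r)|$ requires summing an alternating series, and the ``recursive $k$-point density bounds'' in this paper (Theorem~\ref{thmZeroFreeProof}) yield only uniform estimates of the form $|\rho_{\bl}(v)|\le C$; they do not supply the sharp high-dimensional asymptotics $\rho_k(v_1,\dots,v_k)\approx\rho^k$ uniformly in $k$ that your moment argument would need. The actual source \cite{jenssen2019hard} obtains the lower bound by a different route (an expected-free-volume argument combined with concentration), not through correlation-function recursions. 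So while your heuristic is sound and lands on the right answer, the proof sketch does not close without either invoking the cited result or reproducing a substantial piece of \cite{jenssen2019hard}.
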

	For instance, in dimension $2$ this gives analyticity up to packing density $.18276$.  The best previous lower bound on the critical density (in terms of uniqueness of the infinite volume Gibbs measure) was $1/6$ from~\cite{helmuth2020correlation}.  The best bound obtained via convergence of the virial expansion is $.0751$ by Nguyen and Fern{\'a}ndez~\cite{nguyen2020convergence}.  The new bound is still far from the predicted critical packing density near $.7$~\cite{bernard2011two,engel2013hard}.  
	
	Packing density $2^{-d}$ is a natural barrier to analysis since below $2^{-d}$ free volume (space in which new centers can be placed) is guaranteed by a union bound.  Improving the asymptotic bound in Corollary~\ref{corHSdensity} by any constant factor would likely require significant new insight. 
	
	\subsection{Absence of zeros in the complex plane}
	
	To prove Theorem~\ref{thmAnalytic} we will work in the setting of a multivariate, complex-valued partition function.  Given any bounded, measurable function $\bl : \Lam \to \mathbb C$, we define   
	\begin{equation}
	Z_\Lam( \bl) = \sum_{k \ge 0} \frac{1}{k!} \int_{\Lambda^k} \prod_{i=1}^k \bl(x_i)  \cdot  e^{- U(x_1, \dots, x_k) } d x_1 \cdots dx_k  \,.
	\end{equation}
	When $\bl $ is constant this definition coincides with~\eqref{eqGPPParition}. 
	
	Following the Lee-Yang theory of phase transitions~\cite{yang1952statistical}, absence of phase transitions and analyticity of the pressure for activities in $[0,\lam_0]$ is closely related to the existence of a region $R$ in the complex plane containing the segment $[0,\lam_0]$ so that for any $\lam \in R$ and any bounded region $\Lam$, $Z_\Lam(\lam) \ne 0$.  Theorem~\ref{thmAnalytic} thus follows, after a little complex analysis in Section~\ref{secFinalProofs}, from the next theorem.  
	
	\begin{theorem}
		\label{thmZeros}
		Let $\phi$ be a repulsive, tempered potential and  suppose $\lam \in (0, e/C_{\phi})$.  Then there exist $\eps>0$ and $C>0$, so that the following holds.  Let $L_0$ be the $\eps$-neighborhood of the interval $[0, \lam]$ in the complex plane.  Then for any  measurable function $\bl: \R^d \to \mathbb C$ so that $\bl(x) \in L_0$ for all $x$, and any bounded, measurable $\Lam \subset \R^d$, we have 
		\[|\log Z_\Lam( \bl)| \leq  C |\Lam| \, .\]
		In particular, $ Z_\Lam( \bl) \ne 0$. 
	\end{theorem}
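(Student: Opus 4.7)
The plan is to follow the algorithmic correlation decay approach suggested by the abstract. The central object is the normalized marginal density
\begin{equation*}
R_\bl(x) := \frac{Z_\Lam\!\left(\bl \cdot e^{-\phi(\cdot-x)}\right)}{Z_\Lam(\bl)},
\end{equation*}
which is well-defined whenever $Z_\Lam(\bl)\ne 0$ and satisfies $\rho_\bl(x)=\bl(x) R_\bl(x)$ for the one-point density. I would show that $R_\bl(x)$ satisfies a contractive integral recursion, and use this to extend the trivial real-axis bound $R_\bl(x)\in(0,1]$ to a uniform bound $|R_\bl(x)|\le M$ throughout the complex neighborhood $L_0$ of $[0,\lam]$, whence both $Z_\Lam(\bl)\ne 0$ and $|\log Z_\Lam(\bl)|\le C|\Lam|$ would follow.

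The first technical step is a Papangelou-type computation: by extracting one particle from the sum defining $Z_\Lam$, one obtains $\frac{d}{dt}\log Z_\Lam(t\bl)=\int_\Lam \bl(x) R_{t\bl}(x)\,dx$ and hence
\begin{equation*}
\log Z_\Lam(\bl) = \int_0^1 \int_\Lam \bl(x)\,R_{t\bl}(x)\,dx\,dt,
\end{equation*}
so that a uniform bound $|R_{t\bl}(x)|\le M$ immediately gives $|\log Z_\Lam(\bl)|\le M(\lam+\eps)C_\phi|\Lam|$. The same computation applied to the one-parameter family $\bl_s(y):=\bl(y)\bigl(1-s(1-e^{-\phi(y-x)})\bigr)$, $s\in[0,1]$, interpolating between $\bl_0=\bl$ and $\bl_1=\bl\cdot e^{-\phi(\cdot-x)}$, yields the key integral identity
\begin{equation*}
\log R_\bl(x) = -\int_0^1 \int_\Lam \bl(y)\,\bigl(1-e^{-\phi(y-x)}\bigr)\,R_{\bl_s}(y)\,dy\,ds,
\end{equation*}
expressing $R_\bl(x)$ as the exponential of an integral of $R_{\bl_s}$ at ``shrunken'' activities $\bl_s$, which themselves still take values in $L_0$ provided $\eps$ is small enough.

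The final and hardest step is to close this self-consistency. On the positive real axis, repulsivity gives $R_\bl(x)\in(0,1]$, and the integral identity then forces $R_\bl(x)\ge e^{-\bl C_\phi}$, so for $\bl\in[0,e/C_\phi)$ one has $R_\bl(x)\in[e^{-e},1]$, a compact subinterval of $(0,\infty)$ bounded away from $0$. I would extend this to complex $\bl\in L_0$ by a continuity/fixed-point argument along the scaled path $t\mapsto t\bl$, using the integral identity as a self-consistency constraint on a carefully chosen complex region $\mathcal{D}\supset[e^{-e},1]$ that is preserved by the exponential-of-integral operator whenever $\bl\in L_0$. The main obstacle is making this contraction \emph{quantitative} enough to reach all the way to $\lam<e/C_\phi$: the naive estimate $|R_{\bl_s}|\le M$ only closes the recursion up to $\lam C_\phi\le 1/e$ (the classical Groeneveld regime), so one must instead exploit the sharp real-axis bound $R\le 1$ together with a Weitz-style potential-function reparametrization of $R$, chosen so that the recursion is contractive on $\mathcal{D}$ with a constant that only degrades at $\lam C_\phi=e$. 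Ensuring simultaneously that $\mathcal{D}$ is preserved and that $\eps$ does not shrink with $|\Lam|$ is the delicate uniformity claim.
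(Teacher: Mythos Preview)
Your overall architecture matches the paper's: an integral identity for $\log Z$ in terms of one-point densities, a recursive integral identity for the density, a potential-function contraction in a complex neighborhood of the real interval, and a continuity argument along the path $t\mapsto t\bl$ to propagate the bound from $t=0$. The paper carries out exactly these four steps (Lemma~\ref{lemZIntegral}, Theorem~\ref{thmMainIdentitiy}, Lemma~\ref{lemContract}, Theorem~\ref{thmZeroFreeProof}).

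The gap is in your second identity, and it is precisely the step you flag as ``the main obstacle.'' Your linear interpolation $\bl_s(y)=\bl(y)\bigl(1-s(1-e^{-\phi(y-x)})\bigr)$ gives
\[
\log R_\bl(x)=-\int_0^1\!\int_\Lam \bl(y)\,(1-e^{-\phi(y-x)})\,R_{\bl_s}(y)\,dy\,ds,
\]
whose integrand is the hybrid $\bl(y)R_{\bl_s}(y)$. This is \emph{not} the density $\rho_{\bl_s}(y)=\bl_s(y)R_{\bl_s}(y)$, nor is it a function of $R$ alone: the complex activity $\bl(y)$ sits inside the integral. To push past the Groeneveld regime one needs a potential reparametrization together with a convexity reduction to constant inputs (this is the content of Section~\ref{secComplexContraction}, with $\psi(\rho)=\log(1+C_\phi\rho)$), but both ingredients require the recursion to be a self-map on a single scalar quantity weighted against the \emph{positive} measure $(1-e^{-\phi})\,dw$. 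With your identity neither an induction on $\rho$ (since $\bl(y)R_{\bl_s}(y)=\rho_{\bl_s}(y)/\alpha$ with $\alpha\in(0,1]$ possibly arbitrarily small, e.g.\ hard core near $s=1$) nor on $R$ (since the contraction-inducing potential $\log(1+C_\phi\lam R)$ would have to depend on the pointwise-varying $\bl(y)$) closes. The paper itself notes that the closely related Meeron interpolation is contractive only up to $1/C_\phi$ and leaves it open whether a change of coordinates can rescue it.

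The fix is a different interpolation. Instead of scaling by $s$, the paper turns on the factor $e^{-\phi(v-\cdot)}$ \emph{radially}: set $\bl_t(y)=\bl(y)e^{-\phi(v-y)}$ for $\mathrm{dist}(v,y)<t$ and $\bl_t(y)=\bl(y)$ otherwise. Differentiating in $t$ picks up a surface integral over $w\in\partial B_t(v)$, and at such $w$ the activity is unmodified, $\bl_{v\to w}(w)=\bl(w)$, so the integrand is exactly the density $\rho_{\bl_{v\to w}}(w)$. The resulting identity
\[
\rho_\bl(v)=\bl(v)\exp\Bigl(-\int_{\R^d}\rho_{\bl_{v\to w}}(w)\,(1-e^{-\phi(v-w)})\,dw\Bigr)
\]
is a genuine self-map on $\rho$-values against the positive measure $(1-e^{-\phi})\,dw$, with the non-constant activity appearing only as the single scalar prefactor $\bl(v)$. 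The potential-function contraction and convexity reduction then go through uniformly in that prefactor, which is what yields the full range $\lam<e/C_\phi$.
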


	\subsection{Outline of techniques}
	\label{secTechniques}
	
	The classical approach to proving absence of phase transitions, convergence of the cluster expansion, is limited by the possible presence of a  complex zero of the partition function far from the positive real axis which determines the radius of convergence of the cluster expansion but does not affect the physical behavior of the system.  In fact, for repulsive gasses, the closest singularity of the pressure to the origin lies on the negative real axis~\cite{groeneveld1962two}. Therefore to move beyond the limits of cluster expansion convergence one must use properties of positive activities or utilize regions of the complex plane that are not symmetric around $0$.   Two previous approaches in this direction are probabilistic: disagreement percolation~\cite{dereudre2019introduction,christoph2019disagreement,benevs2019decorrelation} and Markov chain mixing~\cite{helmuth2020correlation}.  In the case of the hard sphere model these techniques surpass the bounds for uniqueness given by the cluster expansion.  One drawback is that these arguments rely on a finite-range property of a potential (like hard spheres) and it is not clear how to extend these arguments to potentials satisfying the more natural temperedness assumption.  
	
	Our approach to proving Theorem~\ref{thmAnalytic} will instead be analytic, using essentially no probabilistic tools.  One consequence of the convergence of the cluster expansion is that $Z_\Lam(\lam)$ is not zero for $\lam$ in a disk in the complex plane, uniformly in the region $\Lam$.  To avoid the singularity on the negative real axis, we instead prove that $Z_\Lam(\lam)$ is not zero in an asymmetric region in the complex plane that contains $[0, e/C_{\phi})$.   We obtain this zero-free region by proving recursive bounds on the log partition function and on a generalization of the density of a point process to complex parameters. 
	
	Our approach is inspired by the `correlation decay method', an algorithmic technique from theoretical computer science for approximating the partition function of a discrete spin model.  The method was introduced by Weitz~\cite{Weitz} in an influential paper on approximate counting and sampling from the hard-core model on a graph $G$.

	At the core of Weitz's argument is a formula for the marginal probability that a given vertex $v$ is occupied in the independent set drawn from the hard-core model on $G$ at activity $\lam$.   With $R_{G,v}$ denoting the ratio of the occupation probability of $v$ in $G$ to the non-occupation probability, the formula becomes
	\begin{equation}
	\label{eqWeitzRecurs}
	 R_{G,v} = \lam \prod_{i=1}^d \frac{1}{1+ R_{G_i,u_i}} \, ,
	 \end{equation}
	where $u_1, \dots, u_{d}$ are the neighbors of $v$ and $G_i$ is a graph obtained from $G$ by removing $v$ and some other specified vertices.  The identity \eqref{eqWeitzRecurs} follows by writing the occupation probability as a ratio of partition functions and then writing this ratio as a telescoping product.  One can repeat the calculation in~\eqref{eqWeitzRecurs} recursively, applying it to calculate $R_{G_i,u_i}$ for each neighbor $u_i$ in terms of its neighbors in $G_i$.  In each step of the recursion more vertices are removed from the graphs until one ends up computing the occupation ratio for a graph consisting of a single vertex, a trivial calculation.
	
	There are two equivalent  interpretations of this recursive sequence of calculations.   The first is that of a `computational tree', in which nodes represent computations of occupation ratios in specified graphs, organized in a hierarchical fashion in which the graph associated to node $w$ in the tree is a subgraph of the graph associated to node $u$ if $w$ is a descendant of $u$.    The second interpretation is that the  sequence of calculations constructs a `self-avoiding walk tree' $T$ with root $r$ so that $R_{G,v} = R_{T,r}$, so that a recursive procedure for computing the occupation probability of the root of a tree can be applied to compute the occupation probability of a given vertex in a general graph.  Such a construction had appeared previously in the context of matchings in graphs in~\cite{godsil1981matchings} and implicitly in the work of \cite{scott2005repulsive}.    
	
Weitz then showed that if the original graph $G$ has maximum degree $\Delta$ and the hard-core model on the infinite $\Delta$-regular tree exhibits weak spatial mixing, then the computational tree (and self-avoiding walk tree) exhibits exponential decay of dependence on the values of the  computation at depth $t$.  In particular this means that the computational tree can be truncated and still output a good approximation of $R_{G,v}$.   The threshold for weak spatial mixing on the $\Delta$-regular tree can be explicitly determined via fixed point equations as $\lam_c(\Delta) = \frac{(\Delta-1)^{\Delta-1}}{(\Delta-2)^{\Delta}}$.  Consequences include an efficient algorithm for approximating the partition function on any graph of maximum degree $\Delta$ for $\lam < \lam_c(\Delta)$, and a lower bound on the uniqueness threshold of the hard-core model on $\Z^d$ of $\lam_c(2d)$.  The correlation decay method has since been refined and used to prove the best current lower bounds on the uniqueness threshold for the hard-core lattice gas model on $\Z^2$~\cite{vera2015improved,sinclair2017spatial}.
	
	The extension of the method most relevant for our approach is the paper of Peters and Regts~\cite{peters2019conjecture} in which they use the recursion of Weitz applied to complex values of $\lam$ along with ideas from complex dynamical systems to prove the existence of a zero-free region for the independence polynomial of graphs of maximum degree $\Delta$ in a complex neighborhood containing $[0, \lam_c(\Delta))$, thus solving a conjecture of Sokal~\cite{sokal2000personal}.  Their work was in part motivated by another approach to approximate counting, the polynomial interpolation method of Barvinok~\cite{barvinok2016combinatorics}.

	Our goal in this paper is to develop a version of the correlation decay method for continuous particle systems.  This presents several challenges including determining a useful analogue of the  recurrence given by~\eqref{eqWeitzRecurs}.  We briefly summarize our approach here, with precise definitions postponed until later.  Our main object of study is the density function of a point process, $\rho_{\bl}(v)$, and more generally  a generalization of the density function to complex-valued activity functions (given in~\eqref{eq1ptDef} below and used already in~\cite{ruelle1963correlation}). 

We develop several  tools to work with these complex densities.  The first is Theorem~\ref{thmMainIdentitiy} which provides an  integral identity for $\rho_{\bl}(v)$ in terms of densities with respect to modified activity functions $\bl_{v \to w}$.
\begin{equation}
\label{eqIntIntro}
		\rho_{\bl}(v) = \bl (v) \exp\left(-\int_{\R^d}\rho_{\bl_{v \to w}}(w)(1 - e^{-\phi(v-w)})     \,dw  \right)\,.
		\end{equation}
  This is inspired by the identity~\eqref{eqWeitzRecurs} from the discrete setting, though here instead of vertices being removed from $G$ to form the graphs $G_i$, we have the original activity function $\bl$ being `damped' (multiplied pointwise  by values in $[0,1]$) to form  the new activity functions $\bl_{v \to w}$. 

The second tool is  Lemma~\ref{lemContract}, which provides  a contractive property of the functional that defines the identity in~\eqref{eqIntIntro}.  In particular, if the densities in the integrand in~\eqref{eqIntIntro} all lie in a specified complex domain, then $\rho_{\bl}(v)$ also does.  The condition for this property determines our convergence criterion.

Finally we  use  an integral expression for $\log Z_\Lam(\lam)$ in terms of densities given in Lemma~\ref{lemZIntegral}.  Along with  Lemma~\ref{lemContract} this allows us to bound $\log  Z_\Lam(\lam)$ and prove the absence of zeroes in a region in the complex plane.  While in the discrete setting,  the proof of zero freeness in~\cite{peters2019conjecture} uses induction on the number of vertices in a graph facilitated by the identity~\eqref{eqWeitzRecurs}, our argument (the proof of Theorem~\ref{thmZeroFreeProof}) uses a kind of continuous induction over activity functions $\bl$ starting with the identically $0$ activity function, facilitated by our identity~\eqref{eqIntIntro} which involves damping activity functions.

	At a very high level, our approach has some similarity to the approach of Penrose and Ruelle via the Kirkwood-Salsburg equations and that of Meeron~\cite{meeron1970bounds}: we write identities involving densities and show that for a certain range of complex parameters the operator defining these identities is contractive in a suitable sense.  In Ruelle's argument, uniqueness follows from invertibility of $1 - \lambda K$ where $K$ is an operator on a Banach space.  When the norm of $\lambda K$ is strictly less than $1$ then invertibility, and thus uniqueness, follows.  Without a deeper understanding of the spectrum of this operator $K$, this approach inherently only provides uniqueness for $\lambda$ in a disk centered at $0$.  Conversely, we work entirely with values of $\lambda$ near the positive interval $[0,e/C_\phi)$ which allows us to avoid the non-physical obstructions to analyticity on the negative real axis.    Meeron's interpolation between zero interaction and full interaction also avoids the singularity on the negative real axis, but the operator defining his recursion is contractive only for $\lam < 1/ C_{\phi}$; it would be interesting to see if combining his approach with a change of coordinates as in Section~\ref{secComplexContraction} below could match the bound of Theorem~\ref{thmAnalytic}.   One could view successive applications of the identity in~\eqref{eqIntIntro} as a non-uniform interpolation from activity $0$ to activity $\lam$.

	\subsection{Future directions}
	
	While we work here only with repulsive potentials, we would be very interested to see if the results could be generalized to the class of stable, tempered potentials, the setting of the results of Penrose~\cite{penrose1963convergence} and Ruelle~\cite{ruelle1963correlation} which include more physically relevant examples such as the Lennard--Jones potential.  For a discussion from the physics perspective of the utility of purely repulsive potentials such as the hard sphere mode, see~\cite{widom1967intermolecular,barker1976liquid}. 
	
	When specialized to real parameters, the functional contraction properties given in Section~\ref{secComplexContraction} can be used to show that a recursive computation of the density exhibits exponentially small dependence on boundary conditions in the height of the recursion.  This is turn can be used to show strong spatial mixing for finite-range, repulsive potentials.  In light of these properties and the algorithmic roots of our techniques, it would be interesting to use these methods to design efficient algorithms for estimating the pressure or density when $\lam < e/ C_{\phi}$.
	
	We believe that one could take advantage of the geometry of  low dimensional Euclidean space (e.g. $d=2,3$) and an appropriate notion of the connective constant of $\R^d$ to improve the bounds of Theorem~\ref{thmAnalytic}, as was done for the hard-core lattice gas by Sinclair, Srivastava, {\v{S}}tefankovi{\v{c}}, and Yin~\cite{sinclair2017spatial}.  Doing this for the hard sphere model in dimension $2$ would test the limits of this method and the analogy with discrete spin systems.

	\subsection{Notation}
	
	A \textit{region} is a bounded, measurable subset of $\R^d$.  The volume of $\Lam \subset \R^d$ is denoted $|\Lam|$.  We denote by $\mathrm{dist}(x,y)$ the Euclidean distance between $x, y \in \R^d$.  An \textit{activity function} on a region $\Lam$ is a bounded, measurable function from $\Lam \to \mathbb C$.  We use a bold symbol, e.g. $\bl$, for a non-constant activity function.  Throughout, a \emph{complex neighborhood} is a bounded open subset of $\C$; in each case here, the complex neighborhoods in question are conformal images of the open unit disk, and thus are simply connected as well.  For  positive parameters $\lambda$ and $\eps$, we write $\mathcal{N}(\lambda,\eps)$ to be the complex neighborhood $\{z \in \C :  \dist(z,[0,\lambda]) < \eps\}$.  For a set $A$ we write $\one_A$ to be the indicator function of the set $A$.
	
	\section{Generalized densities}
	\label{secDensities}
	
	To prove Theorem~\ref{thmZeros}, we use a generalization of the density of a point process to complex activity functions:  the density of $v \in \Lam$ at activity $\bl$ is 
	\begin{equation}
	\label{eq1ptDef}
	\rho_{\Lam,\bl}(v) = \bl (v)  \frac{Z_\Lam (\bl e^{-\phi(v- \cdot)}  ) }{ Z_\Lam(\bl)  }
	\end{equation}
	if  $Z _\Lam(\bl) \ne 0$.   If $Z_\Lam(\bl) = 0$ then the density is undefined.   Here $\bl e^{-\phi(v- \cdot)}  : \Lam \to \mathbb C $ denotes the function $\bl(x) e^{-\phi(v-x)}$.  
	
	In Section~\ref{secFinalProofs} we will also work with multipoint densities: the $k$-point density (or $k$-point correlation function) of $(v_1,\ldots,v_k) \in \Lam^k$ at activity $\bl$ is  
	\begin{equation}
	\label{eqKpt}
	\rho_{\Lam,\bl}(v_1,\ldots,v_k) =\bl(v_1)\cdots \bl(v_k) \frac{Z_\Lam(\bl e^{-  \sum_{i=1}^k\phi(v_i - \cdot)})}{Z_\Lam(\bl)} e^{-U(v_1,\ldots,v_k)}\,.
	\end{equation}
	
	\begin{remark*}
		There is a natural probabilistic interpretation of the densities when $\bl$ is non-negative.  First define the \emph{Gibbs point process} $\bX$ to be the random point set in $\Lambda$ with density proportional to $e^{-U(\cdot)}$ against the Poisson process with intensity $\bl$.  Then the density $\rho_{\Lam,\bl}$ is the density of the measure that assigns to a set $A$ the mass $\E[|\bX \cap A|  ]$ with respect to Lebesgue measure; a similar fact holds for the multipoint density with $\E[|\bX\cap A|]$ replaced by a certain factorial moment.  When $\bl$ is non-negative, these may be taken as a definition of the density $\rho_{\Lam,\bl}$ and \eqref{eq1ptDef} and \eqref{eqKpt} become identities to check (see, e.g.\ \cite[Chapter 4]{ruelle1999statistical}).  Since we are interested in complex values of $\bl$, we use \eqref{eq1ptDef} and \eqref{eqKpt} as our definition.  This idea appears in Ruelle's classic text on statistical mechanics \cite[Chapter 4]{ruelle1999statistical} as well, in which this identity appears in a slightly different form to extend the definition of $\rho_{\Lam,\lambda}$ to complex $\lambda$.  Where our definition \eqref{eq1ptDef} differs from Ruelle, however, is the use of a ``multivariate'' $\bl$ rather than constant $\lambda$ which allows us to write $\rho_{\Lam,\bl}$ as $\bl$ times a ratio of partition functions.  This turns out to be a crucial feature of \eqref{eq1ptDef} and essentially allow us to prove zero-freeness of $Z_\Lambda(\bl)$ inductively (see Theorem \ref{thmZeroFreeProof}).
	\end{remark*}

	In what follows in Sections~\ref{secDensities},~\ref{secComplexContraction},and~\ref{secMainProof} the region $\Lam$ will be fixed, and so we will drop the subscript $\Lam$ from the notation, writing $Z(\bl)$ for $Z_{\Lam}(\bl)$ and $\rho_{\bl}(x)$ for $\rho_{\Lam, \bl}(x)$.  We will interpret an activity function $\bl$ on $\Lam$ as a function $\bl : \R^d \to \mathbb C$ that is $0$ on $\R^d \setminus \Lam$.  In fact, Lemma~\ref{lemZIntegral} and Theorem~\ref{thmMainIdentitiy} below hold for bounded, integrable functions $\bl : \R^d \to \mathbb C$ without requiring bounded support.  We call such functions \textit{activity functions}.  
	
	Given an activity function $\bl$, let $\mathcal A_{\bl}  = \{\bl' = \alpha \cdot \bl : \alpha  : \R^d \to [0,1] \text{ is measurable} \}$.  
	We will use the following hereditary notion of zero freeness.  
	\begin{defn}
		An activity function $\bl$ is totally zero-free  if $Z(\bl') \ne 0$  for all $\bl' \in \cA_{\bl}$. 
	\end{defn}
	
	Our first lemma is an integral identity for the log partition function. 
	\begin{lemma}\label{lemZIntegral}
		If an activity function $\bl$ is totally zero-free then 
		\begin{align*}
		\log  Z(\bl) &= \int_{\R^d} \rho_{\hat{\bl}_x}(x)\,dx
		\end{align*}
		where
		$$ \hat{\bl}_x(y) = \begin{cases}
		0 & \text{if }y \in \Lambda_x \\
		\bl(y) &\text{if }y \notin \Lambda_x
		\end{cases}$$
		and $\Lambda_x = \{y \in \R^d : \dist(0,y) < \dist(0,x)  \}$.
	\end{lemma}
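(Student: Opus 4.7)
The plan is to interpolate between $Z(\bl)$ and $Z(0)=1$ by switching off $\bl$ on nested balls around the origin, ordered by radius. For $r \ge 0$ let $B_r = \{y \in \R^d : \dist(0,y) < r\}$ and define the activity
\[
\bl^{(r)}(y) = \bl(y)\, \one_{\{\dist(0,y) \ge r\}},
\]
so that $\bl^{(0)} = \bl$ and, for each $v$ with $\dist(0,v) = r$, $\bl^{(r)}$ agrees pointwise with $\hat{\bl}_v$. Because $\bl^{(r)} \in \cA_{\bl}$, total zero-freeness of $\bl$ guarantees $Z(\bl^{(r)}) \ne 0$ for all $r \ge 0$, and integrability of $\bl$ gives $\bl^{(r)} \to 0$ in $L^1$ and hence $Z(\bl^{(r)}) \to 1$ as $r \to \infty$.

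Next I would differentiate $Z(\bl^{(r)})$ in $r$ term-by-term. In the $k$-th term the integrand $\prod_{i=1}^{k}\bl(x_i)\, e^{-U(x_1,\ldots,x_k)}$ is supported on $(\R^d \setminus B_r)^k$, and as $r$ grows the domain loses configurations in which some $x_i$ lies in the infinitesimal shell $\{r \le \dist(0,x_i) < r + dr\}$. Using symmetry of the integrand in $(x_1,\dots,x_k)$ to collapse the $k$ coordinate-wise boundary contributions and re-indexing $m = k-1$, this produces
\[
-\frac{d}{dr} Z(\bl^{(r)}) = \int_{\partial B_r} \bl(v)\, Z\!\bigl(\bl^{(r)} e^{-\phi(v - \cdot)}\bigr)\, dS(v).
\]
Dividing by $Z(\bl^{(r)}) \ne 0$ and observing that $\bl(v) = \hat{\bl}_v(v)$ and $\bl^{(r)} = \hat{\bl}_v$ for every $v \in \partial B_r$, the integrand is identified as $\rho_{\hat{\bl}_v}(v)$ via \eqref{eq1ptDef}, giving
\[
-\frac{d}{dr}\log Z(\bl^{(r)}) = \int_{\partial B_r} \rho_{\hat{\bl}_v}(v)\, dS(v).
\]

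To finish, I would integrate this identity from $r = 0$ to $r = \infty$ using $\log Z(\bl^{(0)}) = \log Z(\bl)$ and $\log Z(\bl^{(r)}) \to 0$, then convert the iterated radial integral via polar coordinates on $\R^d$ to obtain
\[
\log Z(\bl) = \int_0^\infty \int_{\partial B_r} \rho_{\hat{\bl}_v}(v)\, dS(v)\, dr = \int_{\R^d} \rho_{\hat{\bl}_x}(x)\, dx,
\]
as claimed.

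The main obstacle is the analytic justification of termwise differentiation and the interchange of derivative, sum, and surface integral. This should reduce cleanly via dominated convergence, using $|e^{-U}| \le 1$ (from repulsiveness) and $\bl \in L^1$ to dominate each summand uniformly on compact $r$-intervals; total zero-freeness keeps everything nonsingular along the interpolation, so no accidental obstructions arise. A minor subtlety is that $\bl$ need not have compact support, but this only requires the $L^1$ tail bound $\int_{\dist(0,y) \ge r} |\bl(y)|\,dy \to 0$ as $r \to \infty$ to handle the upper endpoint of the radial integration.
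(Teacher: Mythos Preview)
Your proposal is correct and follows essentially the same approach as the paper: the paper also defines $\bl_t(y) = \one_{\dist(y,0) \geq t}\,\bl(y)$, computes $\frac{d}{dt}Z(\bl_t)$ by peeling off the boundary shell to obtain exactly your surface-integral identity, and then applies the fundamental theorem of calculus to $\log Z(\bl_t)$ with endpoints $Z(\bl_0)=Z(\bl)$ and $Z(\bl_\infty)=1$. Your discussion of the analytic justification (dominated convergence, $L^1$ tails) is slightly more explicit than the paper's, but the argument is the same.
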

	\begin{proof}
		Define $\bl_t(y) = \one_{ \mathrm{dist}(y,0) \geq t } \bl(y)$ and note by assumption $Z(\bl_t) \neq 0$ for all $t$.  We will apply the fundamental theorem of calculus and integrate $\frac{d}{dt}\log Z(\bl_t)$.  Since we want to compute $\frac{d}{dt}\log Z(\bl_t)$, we compute $\frac{d}{dt} Z(\bl_t)$;  first change coordinates to write \begin{align*}
		Z(\bl_t) -1 &= \sum_{j \geq 1} \frac{1}{j!} \int_{(\R^d)^j} \prod_{i = 1}^j \bl_t(v_i) e^{-U(v_1,\ldots,v_j)}\,d\mathbf{v} \\
		&= \sum_{j \geq 1} \frac{1}{j!} \int_{0}^\infty \int_{\partial (B_s(0)^j)} \prod_{i = 1}^j \bl_t(v_i) e^{-U(v_1,\ldots,v_j)}\,d\mathbf{v} \,ds \\
		&= \sum_{j \geq 1} \frac{1}{j!} \int_{0}^\infty j \int_{\partial B_s(0)} \bl_t(w) \int_{B_s(0)^{j-1}} \prod_{i = 1}^{j-1} \bl_t(v_i) e^{-U(v_1,\ldots,v_{j-1},w)}\,d\mathbf{v}\,dw \,ds \\
		&= \sum_{j \geq 1} \frac{1}{j!} \int_{t}^\infty j \int_{\partial B_s(0)} \bl(w) \int_{B_s(0)^{j-1}} \prod_{i = 1}^{j-1} \bl_t(v_i) e^{-U(v_1,\ldots,v_{j-1},w)}\,d\mathbf{v}\,dw \,ds
		\end{align*}
		and so we have
		\begin{align*}
		\frac{d}{dt}Z(\bl_t) &= -\sum_{j \geq 1} \frac{1}{j!}  \cdot j \int_{\partial B_t(0)} \bl(w) \int_{(\R^d)^{j-1}} \prod_{i = 1}^{j-1} \bl_t(v_i) e^{-U(v_1,\ldots,v_{j-1},w)}\,d\mathbf{v}\,dw
		\end{align*}
		where the outermost integral is a $d-1$ dimensional integral over the boundary of the ball of radius $t$ centered at $0$.  Rearranging the terms, we have  
		\begin{align*}
		\frac{d}{dt}Z(\bl_t) &= - \int_{\partial B_t(0)} \bl(w) \sum_{j \geq 0} \frac{1}{j!} \int_{(\R^d)^{j}} \prod_{i = 1}^{j} \left(\bl_t(v_i) e^{-\phi(v_i - w)}\right) e^{-U(v_1,\ldots,v_{j})}\,d\mathbf{v}\,dw \\
		&= - \int_{\partial B_t(0)}  \bl(w) Z(\bl_t e^{-\phi(w - \cdot)}) \,dw\,.
		\end{align*}
		The fundamental theorem of calculus then gives \begin{align*}
		\log Z(\bl_\infty) - \log Z(\bl_0) &= \int_0^\infty \frac{d}{dt} \log(Z(\bl_t))\,dt \\
		&= - \int_0^\infty\int_{\partial B_t(0)} \bl(w) \frac{Z(\bl_t e^{-\phi(w - \cdot)})}{Z(\bl_t)} \,dw \,dt \\
		&= -\int_{\R^d} \rho_{\hat{\bl}_w}(w)\,dw\,.
		\end{align*}
		Noting $Z(\bl_\infty) = 1$ and $Z(\bl_0) = Z(\bl)$ completes the Lemma.
	\end{proof}

	Next we give a recursive identity for the densities.  Fix an activity function $\bl$.  For $v, w \in \R^d$ let $\bl_{v \to w}: \R^d \to \mathbb C$ be defined by
	\[ \bl_{v \to w}(x) = \begin{cases} \bl (x) e^{-\phi(v-x)} &\mbox{if } \dist(v,x) < \dist(v,w) \\ 
	\bl(x) &\mbox{if } \dist(v,x) \ge \dist(v,w)  \,.\end{cases} \]    
	In particular, $\bl_{v \to w} \in \cA_{\bl}$.

	\begin{theorem}
		\label{thmMainIdentitiy}
		Suppose an activity function $\bl$ is totally zero-free.  Then for every $v \in \R^d$ we have 
		\begin{equation}
		\label{eqIdentity}
		\rho_{\bl}(v) = \bl (v) \exp\left(-\int_{\R^d}\rho_{\bl_{v \to w}}(w)(1 - e^{-\phi(v-w)})     \,dw  \right)\,.
		\end{equation}
	\end{theorem}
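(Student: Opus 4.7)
The plan is to establish the identity via a one-parameter interpolation from $\bl$ to $\bl e^{-\phi(v-\cdot)}$, to which the fundamental theorem of calculus applies for $\log Z$, closely following the template used to prove Lemma~\ref{lemZIntegral}. Specifically, I would define for $t \geq 0$
$$\bl^{(t)}(x) = \bl(x) \exp\bigl(-\phi(v-x)\, \mathbf{1}_{\dist(v,x) < t}\bigr),$$
so that $\bl^{(0)} = \bl$ and $\bl^{(\infty)} = \bl e^{-\phi(v-\cdot)}$. Since $\phi$ is repulsive, the multiplier lies in $[0,1]$, so $\bl^{(t)} \in \cA_{\bl}$ for every $t$; by total zero-freeness of $\bl$, $Z(\bl^{(t)}) \neq 0$ along the entire path and we may continuously define $\log Z(\bl^{(t)})$ by analytic continuation. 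The crucial observation is the pointwise matching $\bl^{(t)} \equiv \bl_{v\to w}$ whenever $\dist(v,w) = t$, which allows the derivative to be expressed in terms of $\rho_{\bl_{v\to w}}$.

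Next I would compute $\tfrac{d}{dt} Z(\bl^{(t)})$ by repeating the coarea/polar-coordinate manipulation from the proof of Lemma~\ref{lemZIntegral}, but with spheres $\partial B_t(v)$ centered at $v$ rather than at the origin. Expanding $Z(\bl^{(t)}) - 1$ in its Mayer series, isolating for each $k$-term the point $w$ attaining the maximum radial distance from $v$ (via symmetry and a $s$-integral over $\partial B_s(v)$), and observing that the $t$-dependence for this distinguished point is a step function switching at $s = t$, one obtains
$$\frac{d}{dt} Z(\bl^{(t)}) = \int_{\partial B_t(v)} \bl(w)\bigl(e^{-\phi(v-w)} - 1\bigr)\, Z\bigl(\bl^{(t)} e^{-\phi(w-\cdot)}\bigr)\, dw,$$
where $dw$ is surface measure. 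Dividing by $Z(\bl^{(t)})$ and using $\bl^{(t)} = \bl_{v\to w}$ for $w \in \partial B_t(v)$, together with the fact that $\bl_{v\to w}(w) = \bl(w)$ since the defining inequality $\dist(v,x) < \dist(v,w)$ is strict at $x=w$, the integrand collapses to precisely $-\rho_{\bl_{v\to w}}(w)(1 - e^{-\phi(v-w)})$. Integrating $t$ from $0$ to $\infty$, applying the coarea formula to merge the radial and surface integrals into a single integral over $\R^d$, exponentiating, and multiplying by $\bl(v)$ then produces~\eqref{eqIdentity} via the defining formula~\eqref{eq1ptDef}.

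The main obstacle is justifying the term-by-term differentiation of the Mayer series for $Z(\bl^{(t)})$, the interchange of $t$-derivative with the Lebesgue integrals inside, and the passage to $t = \infty$. One needs uniform-in-$t$ integrability of both the series for $Z(\bl^{(t)})$ and of its formal derivative. Boundedness and integrability of $\bl$, combined with temperedness of $\phi$, should provide the requisite dominated convergence: the factor $|\bl(w)(1-e^{-\phi(v-w)})|$ is integrable over $\R^d$ by temperedness, and $Z(\bl^{(t)}) \to Z(\bl e^{-\phi(v-\cdot)})$ as $t \to \infty$ follows from dominated convergence on each Mayer term with the bound $|\bl^{(t)}| \leq |\bl|$. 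A convenient route is to work first with a smoothed interpolation (replacing $\mathbf{1}_{\dist(v,x) < t}$ by a mollified cutoff) to differentiate classically, and then pass to the limit; alternatively, one can truncate to $\Lam \subset B_R(v)$ and take $R \to \infty$. With these estimates in place, the FTC argument closes cleanly and the remainder is the algebraic reorganization described above.
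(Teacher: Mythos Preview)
Your interpolation $\bl^{(t)}$ and the fundamental-theorem-of-calculus argument are exactly the paper's proof. One step to tighten: when computing $\tfrac{d}{dt}Z(\bl^{(t)})$, the point $w$ that lands on $\partial B_t(v)$ is the coordinate whose factor jumps at that value of $t$, not the point of \emph{maximum} radial distance from $v$; correspondingly, the remaining $k-1$ coordinates are integrated over all of $\R^d$ with activity $\bl^{(t)}$, not restricted to a ball $B_s(v)$. If you literally isolate the farthest point as in the proof of Lemma~\ref{lemZIntegral}, the inner $k-1$ factors still carry $t$-dependence (unlike in that lemma, where points inside the ball have zero activity), and those contributions must be tracked as well---your displayed formula does not follow from differentiating only the outermost factor. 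The paper handles this via the product rule: each factor $\bl^{(t)}(x_j)$ contributes a Dirac mass at $\dist(v,x_j)=t$ with jump $-\bl(x_j)(1-e^{-\phi(v-x_j)})$, and summing over $j$ with symmetry gives your formula directly. With that adjustment the rest of your argument is correct as written.
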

	\begin{proof}
		For $v$ fixed and each $t \geq 0$ define $$\bl_t(x) = \begin{cases}
		\bl(x)e^{-\phi(v-x)} &\text{ if } \dist(v,x) < t \\
		\bl(x) &\text{ otherwise}
		\end{cases}\,.$$
		Note that $\bl_t  \in \cA_{\bl}$, and so by assumption $Z(\bl_t) \ne 0$. 
		
		It will be helpful to write $\bl_t(x) = \bl(x)(1 - \one_{\dist(v,x) \leq t}(1 - e^{-\phi(v-x)}))$.  As in the proof of Lemma \ref{lemZIntegral}, we will apply the fundamental theorem of calculus to $\frac{d}{dt} \log Z(\bl_t)$. We start by computing $\frac{d}{dt} Z(\bl_t)$.  Write \begin{align*}
		Z(\bl_t) &= \sum_{k \geq 0} \frac{1}{k!} \int_{(\R^d)^k} \prod_{j = 1}^k\left(\bl(x_j)(1 - \one_{\dist(v,x_j) \leq t}(1 - e^{-\phi(v-x_j)})\right) \exp\left(-U(x_1,\ldots,x_k)\right)\,d\mathbf{x}\,.
		\end{align*}
		
		Note that for each term in the product, we have $$\frac{d}{dt} (1 - \one_{\dist(v,x_j) \leq t}(1 - e^{-\phi(v-x_j)})) = -\delta_{\dist(v,x_j)}(t)(1 - e^{-\phi(v-x_j)})$$
		where $\delta_{\dist(v,x_j)}(t)$ is the Dirac delta function; equivalently, change coordinates to write $Z(\bl_t)$ as in the proof of Lemma \ref{lemZIntegral} and use the fundamental theorem of calculus.  By the product rule, this implies \begin{align*}
		\frac{d}{dt}Z(\bl_t) &= -\sum_{k \geq 0} \frac{1}{k!} \int_{(\R^d)^{k-1}} \int_{\partial B_t(v)} \prod_{j = 1}^{k-1}\bl_t(x_j) e^{-U(\mathbf{x})}(1 - e^{-\phi(w-v)})\cdot k \bl(w) e^{-\sum_{j = 1}^{k-1} \phi(x_j - w)}\,d\mathbf{x}\,dw \\
		&= - \int_{\partial B_t(v)} \bl(w)  Z(\bl_t e^{-\phi(w - \cdot)}) (1 - e^{-\phi(w-v)})\,dw\,.
		\end{align*}
		
		By the fundamental theorem of calculus, we then have \begin{align*}
		\log(Z(\bl_{\infty})) - \log(Z(\bl_0)) &= \int_0^\infty \frac{\frac{d}{dt} Z(\bl_t)}{Z(\bl_t)} \,dt \\
		&= -\int_0^\infty \int_{\partial B_t(v)} \bl(w) \frac{Z(\bl_t e^{-\phi(w-\cdot)})}{Z(\bl_t)} (1 - e^{-\phi(w-v)})\,dw\,dt \\
		&= - \int_{\R^d} \rho_{\bl_{v \to w}}(w)(1 - e^{-\phi(v-w)})\,dw\,.
		\end{align*}
		Noting $\bl_\infty = \bl e^{-\phi(v - \cdot)}$ and $\bl_0 = \bl$ and applying \eqref{eq1ptDef} completes the proof.
	\end{proof}

	Finally we need two continuity statements.
	\begin{lemma}
		\label{lemContinuous}
		For any region $\Lam$ and any $M>0$, the map  $\bl \mapsto Z_\Lam(\bl)$ is uniformly continuous in the sup norm on the set of activity functions $\bl$ with $|\bl(x)| \le M$ for all $x \in \Lam$. 
	\end{lemma}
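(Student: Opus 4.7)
The plan is to bound $|Z_\Lam(\bl) - Z_\Lam(\bl')|$ term-by-term in the defining series. Write
\[
Z_\Lam(\bl) - Z_\Lam(\bl') = \sum_{k \ge 1} \frac{1}{k!} \int_{\Lam^k} \left( \prod_{i=1}^k \bl(x_i) - \prod_{i=1}^k \bl'(x_i) \right) e^{-U(x_1,\dots,x_k)} \, dx_1 \cdots dx_k,
\]
and estimate each integrand in absolute value.

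The first key step is the standard telescoping bound: for complex numbers $a_1,\dots,a_k$ and $b_1,\dots,b_k$ with $|a_i|,|b_i| \le M$, one has
\[
\left| \prod_{i=1}^k a_i - \prod_{i=1}^k b_i \right| \le k \, M^{k-1} \max_i |a_i - b_i|.
\]
Applied pointwise with $a_i = \bl(x_i)$ and $b_i = \bl'(x_i)$, this gives an integrand bounded by $k M^{k-1} \|\bl - \bl'\|_\infty$ for activity functions of sup norm at most $M$. The second key observation is that, since $\phi$ is repulsive, $U \ge 0$ and hence $|e^{-U(x_1,\dots,x_k)}| \le 1$ on $\Lam^k$. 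Combining these two estimates and integrating gives
\[
|Z_\Lam(\bl) - Z_\Lam(\bl')| \le \|\bl - \bl'\|_\infty \sum_{k \ge 1} \frac{k \, M^{k-1} |\Lam|^k}{k!} = \|\bl - \bl'\|_\infty \cdot |\Lam| \cdot e^{M |\Lam|}.
\]
Since $|\Lam|$ and $M$ are fixed, the prefactor is a finite constant, yielding the claimed uniform continuity on the ball of sup norm at most $M$.

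There is essentially no technical obstacle here: the only two ingredients are the telescoping product identity and the repulsiveness of $\phi$, which ensures absolute convergence and bounded integrands without appealing to the tempering hypothesis. The boundedness of $\Lam$ is used in a crucial way so that $|\Lam|^k$ produces a convergent exponential series; this is why the statement is restricted to regions rather than all of $\R^d$.
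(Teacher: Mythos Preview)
Your proof is correct and follows essentially the same approach as the paper: the paper also expands the difference $Z_\Lam(\bl)-Z_\Lam(\bl')$ term-by-term, applies the telescoping bound $\left|\prod z_j-\prod w_j\right|\le \theta^{\,k-1}\sum_j|z_j-w_j|$ (stated there as a separate lemma), uses repulsiveness to drop the factor $e^{-U}\le 1$, and sums the resulting exponential series. The only cosmetic difference is that the paper assumes $M\ge 1$ and records the slightly looser constant $\delta\,|\Lam|\,M\,e^{M|\Lam|}$, whereas you obtain $\|\bl-\bl'\|_\infty\,|\Lam|\,e^{M|\Lam|}$.
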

	To prove this we need the following elementary lemma that appears in~\cite{durrett2019probability}; we reproduce the simple proof for completeness: 
	\begin{lemma}\label{lem:prod-diff}
		Let $\{z_j\}_{j = 1}^n$ and $\{w_j\}_{j = 1}^n$ be complex numbers with $|z_j|, |w_j| \leq \theta$ for all $j$.  Then $$\left|\prod_{j = 1}^n z_j - \prod_{j = 1}^n w_j \right| \leq \theta^{n-1}\sum_{j = 1}^n |w_j - z_j|\,.$$
	\end{lemma}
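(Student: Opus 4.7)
The plan is to prove Lemma \ref{lem:prod-diff} by a standard telescoping argument, converting the difference of two products into a sum where only one factor changes between consecutive terms, and then applying the triangle inequality together with the uniform bound $\theta$.

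Concretely, I would first establish the algebraic identity
\begin{equation*}
\prod_{j=1}^n z_j - \prod_{j=1}^n w_j \;=\; \sum_{k=1}^n \Bigl(\prod_{j<k} w_j\Bigr)(z_k - w_k)\Bigl(\prod_{j>k} z_j\Bigr),
\end{equation*}
which one verifies either by a one-line induction on $n$ or by directly checking that the right-hand side telescopes: adding and subtracting the partial products $\bigl(\prod_{j\le k} w_j\bigr)\bigl(\prod_{j>k} z_j\bigr)$ for $k=1,\ldots,n-1$ collapses the sum to the desired difference.

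Next, I would take absolute values and apply the triangle inequality. The hypothesis $|z_j|,|w_j|\le \theta$ then bounds each individual term by
\begin{equation*}
\Bigl|\prod_{j<k} w_j\Bigr|\cdot|z_k - w_k|\cdot\Bigl|\prod_{j>k} z_j\Bigr| \;\le\; \theta^{k-1}\cdot|z_k-w_k|\cdot\theta^{n-k} \;=\; \theta^{n-1}|z_k-w_k|,
\end{equation*}
and summing over $k$ yields the claimed inequality. There is no real obstacle here; the only thing to be careful about is the boundary cases $k=1$ and $k=n$, where one of the two flanking products is empty and should be interpreted as $1$, so that the bound $\theta^{n-1}$ still holds with the convention $\theta^0 = 1$ (which is harmless provided $n\ge 1$; the statement is trivial for $n=0$).
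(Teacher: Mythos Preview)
Your proof is correct and is essentially the same as the paper's: the paper adds and subtracts $z_n\prod_{j<n} w_j$ and then inducts on $n$, which is exactly your telescoping identity unrolled one step at a time. The only cosmetic difference is that you write the full telescoping sum at once, whereas the paper packages it as an induction.
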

	\begin{proof}
		Calculate \begin{align*}
		\left|\prod_{j = 1}^n z_j - \prod_{j = 1}^n w_j \right| &\leq \left|z_n\prod_{j = 1}^{n-1}z_j - z_n\prod_{j = 1}^{n-1} w_j \right| + \left|z_n \prod_{j = 1}^{n-1} w_j - w_n \prod_{j = 1}^{n-1} w_j \right| \\
		&\leq \theta \left| \prod_{j = 1}^{n-1} z_j - \prod_{j = 1}^{n-1} w_j\right| + \theta^{n-1} |z_n - w_n|
		\end{align*}
		and induct on $n$.
	\end{proof}
	Now we prove Lemma~\ref{lemContinuous}. 
	\begin{proof}[Proof of Lemma \ref{lemContinuous}]
		Suppose $\bl, \bl'$ are bounded by $M \ge 1$ in absolute value and $| \bl(x) - \bl'(x)| \leq \del$ for all $x \in \Lam$.  Then
		\begin{align*}
		| Z_{\Lam}(\bl ) - Z_{\Lam}(\bl')| &\le \sum_{k \ge 1} \frac{1}{k!} \int_{\Lam^k}   e^{- U(x_1, \dots, x_k) }  \cdot \left |   \prod_{i=1}^k \bl(x_i)   -  \prod_{i=1}^k \bl'(x_i)   \right |d x_1 \cdots dx_k    \\
		&\le \sum_{k \ge 1} \frac{1}{k!} \int_{\Lam^k}      \left |   \prod_{i=1}^k \bl(x_i)   -  \prod_{i=1}^k \bl'(x_i)   \right |d x_1 \cdots dx_k   \\
		&\le  \sum_{k \ge 1} \frac{1}{k!}  |\Lam|^k  M^k  k \del \\
		&\le \del |\Lam| M e^{|\Lam| M} \,,
		\end{align*}
		and so $ Z_{\Lam}(\cdot )$ is uniformly continuous. 
	\end{proof}
	
	\begin{lemma}
		\label{lemRhoContinuous}
		For any region $\Lam$, if an activity function $\bl$ on $\Lam$ is bounded and totally zero-free then for every $v \in \Lam$, $\rho_{\bl'}(v)$ is uniformly continuous in $\bl'$ in a neighborhood of $\bl$ where the modulus of continuity is uniformly bounded as $v$ varies.
	\end{lemma}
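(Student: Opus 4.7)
The plan is to substitute the definition $\rho_{\bl'}(v) = \bl'(v) \, Z_\Lam(\bl' e^{-\phi(v-\cdot)})/Z_\Lam(\bl')$ and show that each of the three $\bl'$-dependent factors is continuous in $\bl'$ in a neighborhood of $\bl$, with a modulus of continuity that does not depend on $v$. This reduces the claim to Lemma~\ref{lemContinuous} (continuity of $Z_\Lam$ in the sup norm on bounded sets) together with a uniform lower bound on the denominator and a uniform upper bound on the numerator.

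The key observation that makes everything uniform in $v$ is the pointwise bound $0 \le e^{-\phi(v-\cdot)} \le 1$. First, this means $\bl e^{-\phi(v-\cdot)}$ and $\bl' e^{-\phi(v-\cdot)}$ have the same sup-norm majorant as $\bl$ and $\bl'$ themselves, independently of $v$. Second, $\|(\bl-\bl') e^{-\phi(v-\cdot)}\|_\infty \le \|\bl-\bl'\|_\infty$, again independently of $v$. So applying Lemma~\ref{lemContinuous} with a fixed bound $M+1$, where $M = \|\bl\|_\infty$, produces a single constant $C = C(\Lam,M)$ such that both $|Z_\Lam(\bl)-Z_\Lam(\bl')|$ and $|Z_\Lam(\bl e^{-\phi(v-\cdot)}) - Z_\Lam(\bl' e^{-\phi(v-\cdot)})|$ are bounded by $C\|\bl-\bl'\|_\infty$, with the second bound uniform in $v$.

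With this in hand I would finish by writing
\begin{equation*}
\rho_{\bl'}(v) - \rho_\bl(v) = (\bl'(v)-\bl(v)) \frac{Z_\Lam(\bl e^{-\phi(v-\cdot)})}{Z_\Lam(\bl)} + \bl'(v)\left[\frac{Z_\Lam(\bl' e^{-\phi(v-\cdot)})}{Z_\Lam(\bl')} - \frac{Z_\Lam(\bl e^{-\phi(v-\cdot)})}{Z_\Lam(\bl)}\right]
\end{equation*}
and expanding the bracketed difference via the identity $A'/B' - A/B = (A'-A)B/(B'B) + A(B-B')/(B'B)$. The two ingredients I still need are: (i) the uniform lower bound $|Z_\Lam(\bl')| \ge |Z_\Lam(\bl)|/2$, which follows from total zero-freeness (giving $|Z_\Lam(\bl)| > 0$) together with Lemma~\ref{lemContinuous} on a small enough neighborhood; and (ii) the uniform upper bound $|Z_\Lam(\bl e^{-\phi(v-\cdot)})| \le e^{M|\Lam|}$, which comes immediately from term-by-term estimates on the defining series using $e^{-U} \le 1$ and $|\bl e^{-\phi(v-\cdot)}| \le M$. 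Plugging these into the decomposition and collecting constants gives $|\rho_{\bl'}(v) - \rho_\bl(v)| \le C'\|\bl-\bl'\|_\infty$ with $C'$ depending only on $\Lam$, $M$, and $|Z_\Lam(\bl)|$, not on $v$.

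The natural obstacle one might worry about is obtaining a uniform lower bound on $|Z_\Lam(\bl e^{-\phi(v-\cdot)})|$ as $v$ ranges over $\Lam$, which would call for a compactness argument on $\bar\Lam$ and some kind of continuity-in-$v$ statement for $Z_\Lam$. This turns out not to arise in the decomposition above: the $v$-dependent partition function appears only in numerators, where the trivial bound $e^{M|\Lam|}$ suffices, while the only quantities requiring lower bounds, $|Z_\Lam(\bl)|$ and $|Z_\Lam(\bl')|$, are manifestly $v$-independent.
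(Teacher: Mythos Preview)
Your proposal is correct and follows essentially the same approach as the paper's proof: both use Lemma~\ref{lemContinuous} to get a uniform lower bound on $|Z_\Lam(\bl')|$ near $\bl$, then invoke Lemma~\ref{lemContinuous} again (together with the pointwise bound $0\le e^{-\phi(v-\cdot)}\le 1$) to control the numerator uniformly in $v$, and conclude by assembling the pieces of the ratio. Your write-up is more explicit than the paper's---in particular your remark that the $v$-dependent partition function appears only in numerators, so no $v$-uniform lower bound on it is needed, makes transparent something the paper leaves implicit. One small point: as written, your decomposition compares $\rho_{\bl'}(v)$ to $\rho_{\bl}(v)$ at the fixed center $\bl$, whereas the lemma asks for uniform continuity on the whole neighborhood; but your bounds (i) and (ii) hold uniformly for any $\bl''$ in the neighborhood, so the same argument with $\bl$ replaced by $\bl''$ gives the Lipschitz estimate $|\rho_{\bl'}(v)-\rho_{\bl''}(v)|\le C'\|\bl'-\bl''\|_\infty$ you need.
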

	\begin{proof}
		Since $\bl$ is totally zero-free, by the uniform continuity of $Z$, $\bl'$ is totally zero-free and $\rho_{\bl'}(v)$ is well defined for $\bl'$ in a neighborhood  of $\bl$ in the sup norm.  Since $Z_\Lambda(\bl) \neq 0$, by Lemma \ref{lemContinuous} there is some $\eps > 0$ so that  $|Z_\Lam(\bl')|$ is uniformly bounded away from $0$ for $\bl'$ so that $\| \bl - \bl' \|_\infty < \eps$.  Another application of Lemma \ref{lemContinuous} shows that both $\frac{1}{Z_\Lambda(\bl')}$ and $Z_\Lam(\bl' e^{- \phi(v - \cdot)})$  are uniformly continuous for $\| \bl - \bl'\|_\infty < \eps$ and so $\rho_{\bl'}(v) = \bl'(v)\frac{ Z_\Lam(\bl' e^{- \phi(v - \cdot)})   }{  Z_\Lam(\bl')}$ is uniformly continuous in this neighborhood.
	\end{proof}
	
	\section{Complex contraction}
	\label{secComplexContraction}

	For $\lam \in \mathbb C$, and  $ \brho : \R^d \to \mathbb C$  bounded and integrable, define 
	\begin{align*}
	F(\lam, \brho) &=  \lam  \exp\left(-\int_{\R^d} \brho(w)(1 - e^{-\phi(w)})     \,dw  \right) \,.
	\end{align*}
	This function arises on the right hand side of the identity in~\eqref{eqIdentity}.  We will show that it is contractive in an appropriate sense on a complex domain.  This contractive property is similar to that shown for a different function in the case of discrete two-spins systems in~\cite{peters2019conjecture,shao2019contraction}. 
	
	For a positive reals $s, \eps$ we let $\mathcal N(s,\eps) =  \{ y \in \mathbb C: \exists x \in [0, s],  | y - x | < \eps \}$ and $\overline{\mathcal N}(s,\eps) =  \{ y \in \mathbb C: \exists x \in [0, s],  | y - x | \le \eps \}$.
	\begin{lemma}
		\label{lemContract}
		For every $\lam_0 \in ( 0 , e/ C_{\phi})$ there exists $ \eps >0$ and complex neighborhoods $U_1 \subset U_2$ so that  $[0,e/C_\phi] \subset U_1$ with $\overline{U}_1 \subset U_2$ so that the following holds.  If $\lam\in   \mathcal N(\lam_0, \eps)$ and  $\brho (x) \in \overline{U}_2$ for all $x \in \R^d$, then $F(\lam, \brho) \in U_1$. 
	\end{lemma}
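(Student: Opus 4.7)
The plan is to first reduce the statement to a scalar contraction problem and then construct a complex invariant neighborhood for the resulting one-variable map. Rewrite $F(\lam, \brho) = \lam\exp(-I(\brho))$ with $I(\brho) = \int_{\R^d} \brho(w)(1-e^{-\phi(w)})\,dw$. Since $1-e^{-\phi(w)}$ is nonnegative with total integral $C_\phi$, the quantity $I(\brho)/C_\phi$ is a probability-weighted average of values of $\brho$, hence lies in the closed convex hull of $\brho(\R^d)$. Choosing $U_2$ to be convex then forces $I(\brho)/C_\phi \in \overline{U_2}$ whenever $\brho$ takes values in $\overline{U_2}$, and in that case $F(\lam,\brho) = g_\lam(I(\brho)/C_\phi)$ where $g_\lam(z) := \lam e^{-C_\phi z}$ is a scalar entire map. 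It thus suffices to exhibit convex open $U_1 \subset U_2 \subset \C$ with $[0, e/C_\phi] \subset U_1$ and $\overline{U_1} \subset U_2$, together with $\eps > 0$, so that $g_\lam(\overline{U_2}) \subset U_1$ for every $\lam \in \mathcal{N}(\lam_0, \eps)$.

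Next I would analyze the real dynamics of $g_{\lam_0}$. The fixed-point equation $\rho = \lam_0 e^{-C_\phi \rho}$ has a unique positive solution $\rho_* \in (0, 1/C_\phi)$, with $|g_{\lam_0}'(\rho_*)| = C_\phi \rho_* < 1$; indeed $\lam_0 = e/C_\phi$ corresponds exactly to the neutral case $C_\phi \rho_* = 1$ (the substitution $u = C_\phi \rho_*$ reduces the fixed-point equation to $u e^u = C_\phi \lam_0$). Moreover $g_{\lam_0}([0, e/C_\phi]) = [\lam_0 e^{-e}, \lam_0]$ sits compactly inside $(0, e/C_\phi)$, and every real orbit in $[0, e/C_\phi]$ converges (oscillating around $\rho_*$) to $\rho_*$. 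This real-line contraction, with positive slack at both endpoints, is the seed for the complex construction.

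The heart of the argument is promoting this real contraction to a convex invariant region in $\C$: find a convex open $U_1 \supset [0, e/C_\phi]$ with $g_{\lam_0}(\overline{U_1})$ compactly contained in $U_1$. A natural ansatz is a horizontal tube $U_1 = \{x+iy : -\alpha < x < e/C_\phi + \alpha,\ |y| < H(x)\}$ for a positive continuous width function $H$; expanding $g_{\lam_0}(x+iy) = \lam_0 e^{-C_\phi x}(\cos(C_\phi y) - i\sin(C_\phi y))$ and using $|\sin(C_\phi y)| \le C_\phi |y|$, invariance reduces (to leading order in $y$) to the functional inequality
\[
C_\phi \lam_0 e^{-C_\phi x}\, H(x)\ \le\ H(\lam_0 e^{-C_\phi x}).
\]
The left-hand coefficient is exactly $|g_{\lam_0}'(x)|$, which exceeds $1$ when $x < \log(C_\phi \lam_0)/C_\phi$, so a constant $H$ only works when $\lam_0 < 1/C_\phi$ (recovering the classical Meeron bound). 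For general $\lam_0 \in (0, e/C_\phi)$ a positive $H$ exists because the product of magnifications along any real orbit tends to $0$ (the asymptotic per-step factor is $C_\phi \rho_* < 1$); one can prescribe $H$ on a small neighborhood of $\rho_*$ and propagate it consistently along orbits, then interpolate smoothly. I expect the rigorous construction of $U_1$ (or equivalently of $H$) to be the main technical obstacle, with the threshold $\lam_0 = e/C_\phi$ arising precisely as the point at which the inequality becomes unsolvable.

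Finally, once $U_1$ is in hand with $g_{\lam_0}(\overline{U_1})$ at positive distance from $\C \setminus U_1$, the extension to $\lam$ in a small neighborhood of $\lam_0$ is a uniform continuity argument. Take $U_2$ to be a convex $\delta$-enlargement of $U_1$ so that $\overline{U_1} \subset U_2$, and choose $\eps > 0$ small. Since $(\lam, z) \mapsto \lam e^{-C_\phi z}$ is continuous on the compact set $\overline{\mathcal{N}(\lam_0, \eps)} \times \overline{U_2}$, for $\delta$ and $\eps$ sufficiently small we have $g_\lam(\overline{U_2}) \subset U_1$ for all $\lam \in \mathcal{N}(\lam_0, \eps)$, which combined with the scalar reduction proves the lemma.
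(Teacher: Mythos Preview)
Your overall strategy---reduce to a scalar map via convexity, then find a complex invariant region for the one-variable map $g_\lambda(z)=\lambda e^{-C_\phi z}$, then perturb in $\lambda$---is exactly the paper's. The difference is in the step you flag as the main technical obstacle: constructing the invariant region.

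The paper avoids the variable-width construction altogether by a change of coordinates. Setting $\psi(x)=\log(1+C_\phi x)$ and conjugating, the scalar map becomes $\tilde g_\lambda(z)=\log(1+C_\phi\lambda e\cdot e^{-e^z})$ on $[0,\log(1+e)]$, and one checks by direct calculus that $|\tilde g_\lambda'(z)|\le 1-\delta$ uniformly in $z$ whenever $\lambda\le\lambda_0<e/C_\phi$ (equality at $\lambda=e/C_\phi$, $z=\log 2$). A \emph{constant}-width tube $\mathcal N(\log(1+e),\eps_2)$ is then invariant with room to spare, and the perturbation in $\lambda$ goes through exactly as you outline. The image of this tube under $\exp$ is convex for small $\eps_2$, which is precisely the convexity needed for the scalar reduction.

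Translated back into your framework, the paper's choice corresponds to the explicit width function $H(x)=c(1+C_\phi x)$ for small $c>0$. This is linear, hence the tube is convex (which you need for your averaging argument on $U_2$ but did not address). Plugging into your functional inequality $C_\phi\lambda_0 e^{-C_\phi x}H(x)\le H(\lambda_0 e^{-C_\phi x})$ yields, after simplification, $C_\phi^2\lambda_0\, x e^{-C_\phi x}\le 1$; the left side is maximized at $x=1/C_\phi$ with value $C_\phi\lambda_0/e$, so the inequality holds strictly exactly when $\lambda_0<e/C_\phi$. So your sketch is completable, but the orbit-propagation argument you suggest for building $H$ is both harder than necessary and does not obviously yield a convex region; the paper's change of coordinates (equivalently, the linear $H$) is the missing idea.
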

	
	We will prove Lemma~\ref{lemContract} in a sequence of steps.  We will apply a change of coordinates in order to make $F$ a contraction mapping, as was done in the discrete setting in, e.g.~\cite{restrepo2013improved,li2013correlation,peters2019conjecture,shao2019contraction}.  With this in mind, set $\psi(x) := \log(1 + C_\phi x)$.  For a function $\bz:\R^d \to [0,\log(1+e)]$, define $g_\lambda(\bz) := \psi(F(\lambda,\psi^{-1}(\bz) ) )$.  First, we will consider only constant functions $\bz \equiv z$; our first main step is to show that $|g'_\lambda(z)| \leq 1$ for all $z$ and all $\lambda \in [0,e/C_{\phi}]$.   Thus, by definition $$g_\lambda (z) = \log(1 + C_\phi \lambda e \cdot e^{-e^z})$$
	and so $$g_{\lam}'(z) = - \frac{C_\phi \lambda e \cdot e^{z} \cdot e^{-e^z}}{1 + C_\phi \lambda e \cdot e^{-e^z}}\,.$$
	
	\begin{lemma}
		For all $z \geq 0$ we have $|g_{e/C_\phi}'(z)| \leq 1$. 
	\end{lemma}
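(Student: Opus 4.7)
The plan is to reduce the inequality $|g_{e/C_\phi}'(z)| \leq 1$ to a single-variable exponential inequality and then invoke the standard tangent-line bound $1+t \leq e^t$.

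First I would substitute $\lambda = e/C_\phi$ into the given formula for $g'_\lambda(z)$.  Since $C_\phi \cdot (e/C_\phi) \cdot e = e^2$, this gives
\begin{equation*}
g_{e/C_\phi}'(z) \;=\; -\,\frac{e^{2+z - e^z}}{1 + e^{2 - e^z}}\,.
\end{equation*}
The derivative is real and negative on $z \geq 0$ (both numerator and denominator are positive), so $|g_{e/C_\phi}'(z)| \leq 1$ is equivalent, after multiplying by the positive denominator, to
\begin{equation*}
e^{2+z-e^z} \;\leq\; 1 + e^{2 - e^z}\,,\qquad \text{i.e.,}\qquad e^{2-e^z}\bigl(e^z - 1\bigr) \leq 1\,.
\end{equation*}

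Now I would make the substitution $u = e^z$, which ranges over $[1,\infty)$ as $z$ ranges over $[0,\infty)$.  The inequality becomes
\begin{equation*}
(u-1)\,e^{2-u} \;\leq\; 1\,,\qquad \text{equivalently}\qquad u-1 \;\leq\; e^{u-2}\,.
\end{equation*}
Writing $t = u-2$, this is exactly $1 + t \leq e^t$, the standard convexity inequality for the exponential, which holds for all real $t$ with equality iff $t=0$.  Thus the desired bound holds for all $u \geq 1$ (indeed for all $u \in \R$), with equality precisely at $u=2$, i.e. at $z = \log 2$.

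There is no real obstacle here; the only thing to watch is the sign of the denominator $1 + e^{2-e^z}$ so that clearing it preserves the inequality, and confirming that the reduction $u = e^z$ maps $z \geq 0$ into the range $u \geq 1$ where the tangent-line bound has the right orientation.  The tightness of the inequality at $z = \log 2$ is consistent with the threshold nature of $\lambda = e/C_\phi$ appearing in Theorem~\ref{thmAnalytic}: this is the critical point where $g_\lambda$ ceases to be a (weak) contraction, analogous to the fixed-point analyses underlying the tree uniqueness threshold in the discrete correlation decay literature.
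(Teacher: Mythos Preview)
Your proof is correct. Both you and the paper begin with the substitution $u=e^z\in[1,\infty)$, but then diverge: the paper optimizes $h(u)=\dfrac{e^2 u e^{-u}}{1+e^2 e^{-u}}$ directly by computing $h'(u)$, locating its unique zero at $u=2$, and checking $h(2)=1$. You instead clear the denominator and rearrange the inequality $h(u)\le 1$ algebraically to $(u-1)\le e^{u-2}$, which is the tangent-line bound $1+t\le e^t$ with $t=u-2$. Your route is slightly more elementary, avoiding the derivative computation entirely; the paper's route is the more mechanical calculus optimization. The two are tightly linked: the critical-point equation $e^{2-u}=u-1$ that the paper solves is exactly the equality case of your tangent-line inequality, so both identify $u=2$ (i.e.\ $z=\log 2$) as the point of tightness.
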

	\begin{proof}
		Set $r = e^{z} \in [1,\infty)$ and note that then we may write $$h(r):=|g_{e/C_{\phi}}'(z)| = \frac{e^2 r e^{-r}}{1 + e^2 e^{-r}}\,.$$
		Compute $$h'(r)= \frac{e^2 e^{-r}(e^{2-r} -r + 1)}{(1 + e^{2-r})^2}$$
		which has a unique zero at $r = 2$, showing that $h(r) \leq h(2) = 1$.  Noting $h(r) \geq 0$ completes the lemma.
	\end{proof}
	
	Now we can show that for $\lambda < e/C_\phi$ we have that $|g_\lam'|$ is \emph{strictly} less than $1$, implying that $g_\lam$ is a contraction.
	\begin{lemma}\label{lem:g-diff}
		For each $\lambda_0 \in [0,e/C_\phi)$ there exists $\delta > 0$ so that for all $\lambda \in [0,\lambda_0]$, we have $$\max_{z \geq 0} |g_\lambda'(z)| \leq 1 - \delta\,.$$
	\end{lemma}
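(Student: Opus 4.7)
The plan is to reduce to the one-parameter family $h_a(r) := \frac{a r e^{-r}}{1+a e^{-r}}$ analyzed in the previous lemma and then exploit monotonicity in $a$ together with a sharp calculation of the maximizer.

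As in the preceding lemma, substitute $r = e^z \in [1,\infty)$ and $a = C_\phi \lambda e$, so that $|g_\lambda'(z)| = h_a(r)$ with $a \in [0, a_0]$ where $a_0 := C_\phi \lambda_0 e < e^2$. First I would observe that
\[
\frac{\partial h_a(r)}{\partial a} = \frac{r e^{-r}}{(1+a e^{-r})^2} \ge 0,
\]
so $h_a(r) \le h_{a_0}(r)$ for every $r \ge 1$ and every $a \in [0,a_0]$. It therefore suffices to prove that $\max_{r \ge 1} h_{a_0}(r) < 1$.

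Next I would locate the maximizer of $h_{a_0}$ on $[1,\infty)$ by direct differentiation. A short calculation gives
\[
h_{a_0}'(r) = \frac{a_0 e^{-r}\bigl(1 + a_0 e^{-r} - r\bigr)}{(1 + a_0 e^{-r})^2},
\]
so critical points satisfy $a_0 e^{-r} = r - 1$, i.e.\ $a_0 = (r-1)e^{r}$. The map $r \mapsto (r-1)e^r$ is strictly increasing on $[1,\infty)$, taking value $0$ at $r=1$ and $e^2$ at $r=2$. Since $0 \le a_0 < e^2$, there is a unique $r_* \in [1,2)$ with $(r_*-1)e^{r_*} = a_0$. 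Moreover the sign of $1 + a_0 e^{-r} - r$ (which is strictly decreasing in $r$) shows that $h_{a_0}$ is increasing on $[1,r_*]$ and decreasing on $[r_*,\infty)$, so the maximum is attained at $r_*$.

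Plugging the identity $a_0 e^{-r_*} = r_* - 1$ back in yields
\[
h_{a_0}(r_*) = \frac{a_0 r_* e^{-r_*}}{1 + a_0 e^{-r_*}} = \frac{(r_*-1)r_*}{r_*} = r_* - 1.
\]
Since $r_* < 2$, this proves $h_{a_0}(r_*) = r_* - 1 < 1$. Setting $\delta := 2 - r_* > 0$ gives $|g_\lambda'(z)| \le h_{a_0}(r_*) = 1 - \delta$ for every $z \ge 0$ and every $\lambda \in [0,\lambda_0]$, as desired. There is no real obstacle here beyond the critical-point computation; the key step is the monotonicity of $h_a$ in $a$, which is what lets the bound at the threshold $a = e^2$ from the previous lemma improve to a strict bound once $a_0 < e^2$.
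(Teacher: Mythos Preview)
Your proof is correct. The key steps---monotonicity of $h_a$ in $a$, the derivative computation for $h_{a_0}$, the implicit characterization of the maximizer via $a_0 = (r_*-1)e^{r_*}$, and the clean evaluation $h_{a_0}(r_*) = r_*-1$---are all valid, and the choice $\delta = 2 - r_*$ works uniformly in $\lambda \in [0,\lambda_0]$.

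Your route differs from the paper's. The paper does not compute the maximum of $h_{a_0}$ directly; instead it bounds the ratio $|g_\lambda'(z)| / |g_{e/C_\phi}'(z)|$ by a quantity strictly less than $1$ (uniformly in $z$ and in $\lambda \le \lambda_0$), and then multiplies by the bound $|g_{e/C_\phi}'(z)| \le 1$ from the preceding lemma. Your approach is more self-contained and yields an explicit, sharp value of the maximum (namely $r_*-1$, where $r_*$ solves $(r-1)e^r = C_\phi \lambda_0 e$), whereas the paper's ratio argument is shorter but gives a somewhat looser $\delta$ and leans on the previous lemma for the heavy lifting. Either way the upshot is the same: strict contraction once $\lambda_0 < e/C_\phi$.
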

	\begin{proof}
		Set $\eps = e/C_{\phi} - \lam_0 >0$.  Set $\beta = C_\phi e \lambda$ and note that $0 \leq \beta \leq C_\phi e ( e/C_\phi - \eps) = e^2(1 - C_\phi \eps / e)$.  Assume $\beta > 0$, as otherwise $g_\lambda' = 0$.  Write \begin{align*}
		\max_{z  \geq 0} \left|\frac{g_\lambda'(z)}{g_{e/C_\phi}(z)} \right| &= \max_{z \geq 0} \frac{e^{-2} + e^{-e^z} }{\beta^{-1} + e^{-e^z}} \\
		&= \max_{z \geq 0} \left(1 - \frac{\beta^{-1} - e^{-2}}{\beta^{-1} + e^{-e^z}}  \right) \\
		&= 1 - \frac{\beta^{-1} - e^{-2}}{\beta^{-1} + e}\,.
		\end{align*}
		Choosing $\delta:= \min_{\beta \in [0,(1-\eps C_\phi/e)e^2]} \frac{1 - \beta e^{-2}}{1 + \beta e}$ completes the proof.
	\end{proof}
	
	We now extend the definition of $g_{\lam}(\bz)$ to take complex values of $\lam$ and complex-valued functions $\bz$.  Our next lemma is for the special case of constant, real $\bz$. 
	\begin{lemma}\label{lem:gl-diff}
		Fix $\lambda_0 \in [0,e/C_\phi)$.  Then there exists $M > 0$ and $\eps_1 > 0$ so that
		$$\max_{\lambda \in \mathcal{N}(\lambda_0,\eps_1),z \in \R} \left| \frac{d}{d\lambda} g_\lambda(z)\right| \leq M \,.$$
	\end{lemma}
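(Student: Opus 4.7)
The plan is to obtain the bound by direct computation of the derivative together with compactness of the domain. Recall that $g_\lambda(z) = \log(1 + C_\phi \lambda e \cdot e^{-e^z})$, so differentiating with respect to $\lambda$ gives
\begin{equation*}
\frac{d}{d\lambda} g_\lambda(z) = \frac{C_\phi e \cdot e^{-e^z}}{1 + C_\phi \lambda e \cdot e^{-e^z}}.
\end{equation*}
The goal reduces to showing that the numerator is bounded above uniformly on the stated domain and that the denominator is bounded away from $0$.

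For the numerator, since $z \in [-\eps_2, \log(1+e)]$, the quantity $e^z$ lies in a compact subinterval of $(0, \infty)$, so $e^{-e^z}$ is bounded above by a constant (in fact by $1$). The numerator is therefore at most $C_\phi e$ in modulus, independent of $\lambda$.

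The main point is controlling the denominator $D(\lambda,z) := 1 + C_\phi \lambda e \cdot e^{-e^z}$. If $\lambda$ were restricted to $[0, \lambda_0] \subset \R$, then $D$ would be a real number $\geq 1$, and there would be nothing to do. For complex $\lambda \in \mathcal{N}(\lambda_0, \eps_1)$, I write $\lambda = \mu + i\nu$ with $\mu \in [-\eps_1, \lambda_0 + \eps_1]$ (approximately) and $|\nu| \leq \eps_1$. Since $C_\phi e \cdot e^{-e^z}$ is a positive real number bounded above (by $C_\phi e$, say),
\begin{equation*}
\mathrm{Re}\, D(\lambda, z) \geq 1 - C_\phi e \cdot \eps_1 \cdot \max_z e^{-e^z} \geq 1 - C_\phi e \cdot \eps_1,
\end{equation*}
which is at least $1/2$ once $\eps_1$ is small enough (independently of $z$). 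Hence $|D(\lambda,z)| \geq 1/2$ on the entire domain.

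Combining the two estimates gives $|(d/d\lambda) g_\lambda(z)| \leq 2 C_\phi e$, so $M := 2 C_\phi e$ (or any larger constant) works. The only subtle point is checking that neither the numerator nor the denominator blows up or degenerates on the complex neighborhood, but both are straightforward because the domain of $z$ is compact and $\mathcal{N}(\lambda_0, \eps_1)$ stays a bounded distance from any potential singularity of the denominator for small $\eps_1$. There is no real obstacle here; the statement is essentially a uniform continuity / compactness observation needed to allow later arguments to pass from the real line to a small complex neighborhood.
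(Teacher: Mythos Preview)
Your proof is correct and follows the same approach as the paper: compute the derivative explicitly as $\frac{C_\phi e \cdot e^{-e^z}}{1 + C_\phi \lambda e \cdot e^{-e^z}}$ and observe it is uniformly bounded on the stated domain. The paper's proof consists solely of this computation, leaving the boundedness implicit; your argument supplies the routine details (numerator $\le C_\phi e$, and $\mathrm{Re}$ of the denominator $\ge 1 - C_\phi e\,\eps_1$ since $z$ is real and $\mathrm{Re}\,\lambda \ge -\eps_1$) that the paper omits.
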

	\begin{proof}
		This follows from computing $$\frac{d}{d\lambda} g_\lambda(z) = \frac{C_\phi e \cdot e^{-e^z}}{1 + C_\phi \lambda e \cdot e^{-e^z}}\,.$$
	\end{proof}
	
	We now prove a version of Lemma \ref{lemContract} for the case of $\bz \equiv z$: 
	\begin{lemma}\label{lem:one-pt-contraction}
		Fix $\lambda_0 \in [0,e/C_\phi)$.  There exists $\eps_1, \eps_2 >0$ and $\eps_3 \in (0, \eps_2)$ so that for any $\lambda \in \mathcal{N}(\lambda_0,\eps_1)$ and $z \in \overline{\mathcal N}(\log(1 + e),\eps_2)$ we have $g_\lambda(z) \in \mathcal{N}(\log(1 + e),\eps_3)$. 
	\end{lemma}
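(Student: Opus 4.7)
The plan is to reduce the complex statement to the real contraction bound of Lemma \ref{lem:g-diff} and the $\lambda$-Lipschitz bound of Lemma \ref{lem:gl-diff}, using continuity and holomorphicity to pass from real to complex arguments.

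First I would analyze the real picture on $[0,\log(1+e)]$. Since $g_\lambda(z)=\log(1+C_\phi\lambda e \cdot e^{-e^z})$ is strictly decreasing in $z\ge 0$ for $\lambda>0$, its maximum on $[0,\log(1+e)]$ is $g_\lambda(0)=\log(1+C_\phi\lambda)$, which for $\lambda \in [0,\lambda_0]$ is at most $\log(1+C_\phi\lambda_0) < \log(1+e)$. Thus $g_\lambda$ maps $[0,\log(1+e)]$ strictly into itself with a positive ``buffer'' $\eta:=\log(1+e)-\log(1+C_\phi\lambda_0)>0$ below the right endpoint.

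Second I would upgrade the derivative bounds off the real line. Lemma \ref{lem:g-diff} gives a $\delta>0$ with $|g_\lambda'(z)|\le 1-\delta$ for real $\lambda\in[0,\lambda_0]$ and real $z\in[0,\log(1+e)]$. Since $g_\lambda'(z)$ is holomorphic and the bound is on a compact set, by joint continuity there exist $\eps_1',\eps_2'>0$ so that $|g_\lambda'(z)|\le 1-\delta/2$ for all $(\lambda,z)\in \mathcal{N}(\lambda_0,\eps_1')\times\overline{\mathcal{N}}(\log(1+e),\eps_2')$. Similarly, Lemma \ref{lem:gl-diff} gives an $M>0$ such that $|\partial_\lambda g_\lambda(z)|\le M$ on (possibly smaller) such a neighborhood. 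Fix $\eps_1\le \eps_1'$ and $\eps_2\le \eps_2'$.

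Third I would execute the interpolation. Given $\lambda\in\mathcal{N}(\lambda_0,\eps_1)$ and $z\in\overline{\mathcal{N}}(\log(1+e),\eps_2)$, choose real $\lambda^*\in[0,\lambda_0]$ and real $z^*\in[0,\log(1+e)]$ with $|\lambda-\lambda^*|\le \eps_1$ and $|z-z^*|\le \eps_2$. The straight-line paths $[\lambda,\lambda^*]$ and $[z,z^*]$ stay inside the above neighborhoods, so the fundamental theorem of calculus yields
\[
|g_\lambda(z)-g_{\lambda^*}(z^*)|\le |g_\lambda(z)-g_{\lambda^*}(z)|+|g_{\lambda^*}(z)-g_{\lambda^*}(z^*)|\le M\eps_1+(1-\delta/2)\eps_2.
\]
Since $g_{\lambda^*}(z^*)\in[0,\log(1+e)]$, this places $g_\lambda(z)$ within distance $M\eps_1+(1-\delta/2)\eps_2$ of the segment, i.e.\ in $\mathcal{N}(\log(1+e),\,M\eps_1+(1-\delta/2)\eps_2)$. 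Finally I shrink $\eps_1$ so that $M\eps_1<\delta\eps_2/4$; then $\eps_3:=M\eps_1+(1-\delta/2)\eps_2\le \eps_2-\delta\eps_2/4<\eps_2$, as required.

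The main obstacle is essentially bookkeeping: one must order the quantifiers so that the derivative bounds hold on a neighborhood \emph{before} choosing how far we perturb $\lambda$ and $z$ off the real axis, and one must make sure the straight-line homotopy stays inside that neighborhood. No genuinely new computation is needed beyond what Lemmas \ref{lem:g-diff} and \ref{lem:gl-diff} already supply; the contractive logarithmic change of coordinates $\psi$ did all the hard conceptual work.
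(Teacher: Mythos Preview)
Your proposal is correct and follows essentially the same approach as the paper: extend the real derivative bound $|g_\lambda'|\le 1-\delta$ to a complex neighborhood by continuity, then interpolate from $(\lambda,z)$ to a nearby real pair $(\lambda^*,z^*)$ and shrink $\eps_1$ so that $M\eps_1<\delta\eps_2/4$, yielding $\eps_3=\eps_2(1-\delta/4)$. The only cosmetic differences are that you interpolate first in $\lambda$ then in $z$ (the paper does the reverse), which forces you to use the $\partial_\lambda g_\lambda$ bound at complex $z$ rather than only real $z$ as Lemma~\ref{lem:gl-diff} literally states --- a harmless extension by the same continuity/compactness argument you already invoke; your opening paragraph about the buffer $\eta$ is correct but not actually needed in the argument.
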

	\begin{proof}
		First note from Lemma \ref{lem:g-diff}, there is a $\delta > 0$ so that $|g_{\lambda}'(z)| \leq 1 - \delta$ for all $\lambda \in [0,\lambda_0]$ and $z \in [0,\log(1 + e)]$.  Since $g_\lambda'$ is continuous in $z$ and $\lambda$, we may take $\eps_1,\eps_2$ sufficiently small so that \begin{equation}\label{eq:gprime}
		|g_\lambda'(z)| \leq 1 - \delta/2, \quad \text{for all } (z,\lambda) \in \overline{\mathcal N}(\log(1 + e),\eps_2) \times \mathcal{N}(\lambda_0,\eps_1)\,.
		\end{equation}
		To see this, note that for each $\lambda \in [0,\lambda_0]$ and $z \in [0,\log(1+e)]$ we may find a neighborhood of $(\lambda,z)$ for which the above holds, thereby giving an open cover of $[0,\lambda_0] \times [0,\log(1+e)]$; by compactness, we may reduce to a finite cover, thereby giving $\eps_1, \eps_2$.

		By Lemma \ref{lem:gl-diff}, we may take $\eps_1$ small enough so that \begin{equation}\label{eq:glam}
		\max_{\lambda \in \mathcal{N}(\lambda_0,\eps_1),z \in [-\eps_2,\log(1 + e)]} \left| \frac{d}{d\lambda} g(z)\right| \leq \frac{\delta \eps_2}{4 \eps_1}\,.
		\end{equation}
		
		For $\lambda \in  \mathcal{N}(\lambda_0,\eps_1)$ and $z \in \overline{\mathcal N}(\log(1 + e),\eps_2)$, find $z' \in [0,\log(1+e)]$ with $|z - z'| \leq \eps_2$ and $\lambda' \in [0,\lambda_0]$ with $|\lambda - \lambda'| \leq \eps_1$.  Using the bounds \eqref{eq:gprime} and \eqref{eq:glam}, bound 
		\begin{align*}
		|g_\lambda(z) - g_{\lambda'}(z')| &\leq |g_\lambda(z) - g_\lambda(z')| + |g_{\lambda}(z') - g_{\lambda'}(z')| \\
		&< (1 - \delta/2)\eps_2 + \eps_2\delta/ 4 \\
		&= \eps_2(1 - \delta/4) \\
		&=: \eps_3\,.
		\end{align*}
		Since $g_\lambda([0,\log(1 + e)]) \subset [0,\log(1 +e)]$ we have $g_\lambda(z) \in [0,\log(1+e)]$, thereby completing the proof.
	\end{proof}

	To generalize Lemma \ref{lem:one-pt-contraction} to non-constant $\bz$, we will use a convexity argument.  We need the following fact that appears in~\cite[Proof of Lemma $4.1$]{peters2019conjecture}.
	\begin{fact}\label{fact:convex}
		For $r > 0$ and $\eps$ sufficiently small, the image of $\overline{\mathcal{N}}(r,\eps)$ under the map $z \mapsto e^z$ is convex.
	\end{fact}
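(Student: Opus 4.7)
The plan is to recognize $\overline{\mathcal N}(r,\eps)$ as a \emph{stadium}---the rectangle $[0,r] \times [-\eps,\eps]$ capped at each end by a semicircle of radius $\eps$---whose boundary is a $C^1$ Jordan curve built from four smooth arcs meeting tangentially. For $\eps < \pi$ the map $e^z$ is injective on horizontal strips of width less than $2\pi$, hence on the stadium, so the image is a Jordan domain $\Omega_\eps$ with boundary $\partial\Omega_\eps$ equal to the image of the stadium boundary. To prove $\Omega_\eps$ is convex I would show that $\partial\Omega_\eps$ is a $C^1$, piecewise-$C^2$ simple closed curve of non-negative signed curvature (with respect to the inward normal) and then invoke the standard fact that any such curve bounds a convex region.

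On each horizontal segment $\{x \pm i\eps : x \in [0,r]\}$ the image $\{e^x e^{\pm i\eps} : x \in [0,r]\}$ is a straight ray segment, so the signed curvature is identically $0$. For the right semicircular cap $\gamma(\theta) = r + \eps e^{i\theta}$ with $\theta \in [-\pi/2,\pi/2]$, set $\alpha(\theta) = e^{\gamma(\theta)}$; direct differentiation gives $\alpha' = i\eps e^{i\theta}\alpha$ and $\alpha'' = -\eps e^{i\theta}(1 + \eps e^{i\theta})\alpha$, from which
\begin{equation*}
\kappa(\theta) \;=\; \frac{\mathrm{Im}\bigl(\overline{\alpha'(\theta)}\,\alpha''(\theta)\bigr)}{|\alpha'(\theta)|^3} \;=\; \frac{1 + \eps\cos\theta}{\eps\, e^{r + \eps\cos\theta}} \;>\; 0
\end{equation*}
for all $\eps < 1$. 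An essentially identical computation (with $r$ replaced by $0$) handles the left cap.

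Because the stadium boundary is $C^1$ and $e^z$ is a smooth local diffeomorphism on it, the tangents match at the four joining points, so $\partial\Omega_\eps$ is indeed $C^1$ and piecewise-$C^2$ with non-negative signed curvature, and hence bounds a convex region. The main technical nuisance is in applying the curvature-convexity criterion when the curve is only piecewise $C^2$ (with $C^1$ gluing): this can be handled either by mollifying the stadium boundary to a smooth curve and passing to the limit, or by quoting a version of the criterion valid for $C^1$, piecewise-$C^2$ curves. The strict positivity of the curvature on the caps provides enough slack that either route succeeds for all sufficiently small $\eps$.
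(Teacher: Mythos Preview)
Your argument is correct. The curvature computation checks out: with $\alpha(\theta)=e^{r+\eps e^{i\theta}}$ one finds $\overline{\alpha'}\alpha''=i\eps^2|\alpha|^2(1+\eps e^{i\theta})$, whose imaginary part is $\eps^2|\alpha|^2(1+\eps\cos\theta)$, giving the formula you wrote; the left cap and the straight pieces are as you say. The only point worth tightening is the invocation of the curvature--convexity criterion: rather than mollify, you can argue directly that the tangent angle $\psi$ is continuous (by the $C^1$ gluing) and non-decreasing on each arc (since $\psi'=\kappa|\alpha'|\ge 0$), hence globally non-decreasing; by the turning-tangent theorem it increases by exactly $2\pi$, and a simple closed $C^1$ curve with monotone tangent direction bounds a convex region.

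As for comparison with the paper: there is nothing to compare. The paper does not prove this fact at all---it simply quotes it from the proof of Lemma~4.1 in Peters--Regts. Your write-up therefore supplies a self-contained proof that the paper omits, which is a net gain. (For the record, the Peters--Regts argument is essentially the same curvature computation you carried out.)
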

	
	This allows us to generalize Lemma \ref{lem:one-pt-contraction} almost immediately.
	\begin{lemma}\label{lem:cc-potential-world}
		Fix $\lambda_0 \in [0,e/C_\phi)$.  There exists $\eps_1, \eps_2>0$ and $\eps_3 \in (0,\eps_2)$ so that for any $\lambda \in \mathcal{N}(\lambda_0,\eps_1)$ and any $\bz$ so that $\bz(x) \in \overline{\mathcal N}(\log(1 + e),\eps_2)$ for all $x$, we have $g_\lambda(\bz) \in \mathcal{N}(\log(1 + e),\eps_3)$. 
	\end{lemma}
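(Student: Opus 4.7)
The plan is to reduce the non-constant $\bz$ case to the constant case already settled in Lemma~\ref{lem:one-pt-contraction} via a convexity argument. First I would unwind the definition of $g_\lambda$: since $\psi(x) = \log(1 + C_\phi x)$ and $\psi^{-1}(z) = (e^z - 1)/C_\phi$, a direct computation gives
\begin{equation*}
g_\lambda(\bz) \;=\; \log\!\left(1 + C_\phi \lambda\, e \cdot \exp\!\left(-\int_{\R^d} e^{\bz(w)}\, \mu(dw)\right)\right),
\end{equation*}
where $\mu$ is the Borel probability measure on $\R^d$ with density $(1 - e^{-\phi(w)})/C_\phi$ (normalized precisely because $\int(1-e^{-\phi(w)})\, dw = C_\phi$). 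Thus $g_\lambda(\bz)$ depends on $\bz$ only through the single complex number $I := \int_{\R^d} e^{\bz(w)}\, \mu(dw)$, and for a constant function $\bz \equiv z$ one recovers $I = e^z$ and hence $g_\lambda(z) = \log(1 + C_\phi \lambda e \cdot e^{-e^z})$ as used in Section~\ref{secComplexContraction}.

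Next, take $\eps_1, \eps_2, \eps_3$ as in Lemma~\ref{lem:one-pt-contraction}, shrinking $\eps_2$ further if necessary so that Fact~\ref{fact:convex} applies at $r = \log(1+e)$; the set $K := \exp\!\bigl(\overline{\mathcal N}(\log(1+e),\eps_2)\bigr)$ is then convex. Since $\bz(w) \in \overline{\mathcal N}(\log(1+e),\eps_2)$ for every $w$, we have $e^{\bz(w)} \in K$ pointwise, and integrating against the probability measure $\mu$ keeps us in the closed convex hull of the range, which is just $K$ itself. Hence there exists $z^* \in \overline{\mathcal N}(\log(1+e),\eps_2)$ with $e^{z^*} = I$, and therefore $g_\lambda(\bz) = g_\lambda(z^*)$. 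A direct application of Lemma~\ref{lem:one-pt-contraction} to the constant value $z^*$ then yields $g_\lambda(\bz) = g_\lambda(z^*) \in \mathcal{N}(\log(1+e),\eps_3)$, which is the desired conclusion.

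The main (and admittedly mild) obstacle is the standard integration-theoretic fact that a bounded, measurable, $K$-valued function averaged against a probability measure lies in $K$ whenever $K$ is closed and convex -- so all the real work has already been done in establishing Fact~\ref{fact:convex} and Lemma~\ref{lem:one-pt-contraction}. Measurability and boundedness of $w \mapsto e^{\bz(w)}$ are immediate from the hypothesis that $\bz$ is an activity function valued in the bounded set $\overline{\mathcal N}(\log(1+e),\eps_2)$, so no additional hypotheses on $\bz$ (such as continuity or decay) are needed.
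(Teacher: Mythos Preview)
Your argument is correct and follows essentially the same route as the paper: both proofs rewrite $g_\lambda(\bz)$ so that it depends on $\bz$ only through the average $\int e^{\bz(w)}\,\alpha_w\,dw$ against the probability density $\alpha_w=(1-e^{-\phi(w)})/C_\phi$, invoke Fact~\ref{fact:convex} to place this average in $\exp\bigl(\overline{\mathcal N}(\log(1+e),\eps_2)\bigr)$, and then apply Lemma~\ref{lem:one-pt-contraction} to the resulting constant $z^*$. Your version is slightly more explicit in noting that $\eps_2$ may need to be shrunk so that Fact~\ref{fact:convex} applies, but otherwise the two proofs coincide.
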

	\begin{proof}
		We will show that there is some $z \in \overline{\mathcal N}(\log(1 + e),\eps_2)$ so that $g_{\lam}(\bz) = g_{\lam}(z)$.  This would follow if we find such a $z$ so that
		$$C_\phi e^z = \int_{\R^d} e^{\bz(w)} (1 - e^{-\phi(w)})\,dw\,.$$
		Set $\alpha_w := (1 - e^{-\phi(w)})/C_\phi$, and note that $\alpha_w \geq 0$ with $\int \alpha_w = 1$; in particular, $\alpha_w \,dw$ is a probability measure.  Fact \ref{fact:convex} implies there is some such $z$ so that $e^z = \int e^{\bz(w)} \alpha_w \,dw$.  Then we can apply Lemma~\ref{lem:one-pt-contraction} to complete the proof.
	\end{proof}

	\begin{proof}[Proof of Lemma \ref{lemContract}]
		Find $\eps_1, \eps_2$ and $\eps_3$ according to Lemma \ref{lem:cc-potential-world}.  Then we claim we may take $\eps = \eps_1$ and $U_2 = \psi^{-1}(\mathcal{N}(\log(1+e),\eps_2)), U_1 = \psi^{-1}(\mathcal{N}(\log(1+e),\eps_3))$.  The only thing to check is that $U_j$ are open and bounded, and the boundaries of $\mathcal{N}(\log(1+e),\eps_j)$ are mapped to the boundaries of their inverse image under $\psi$; since $\psi^{-1}(z) = (e^z - 1)/C_\phi$ is analytic with non-vanishing derivative, it is a conformal map and thus these properties hold.
	\end{proof}

	\section{Zero freeness }
	\label{secMainProof}
	
	In this section we use the tools from the previous sections to prove Theorem~\ref{thmZeros}.  It is contained in the following stronger theorem.

	\begin{theorem}\label{thmZeroFreeProof}
		For every $\lam_0 \in ( 0 , e/ C_{\phi})$ there exists $\eps >0$ and $C>0$ so that the following holds for every region $\Lam \subset \R^d$.  Let $\bl$ be an activity function on $\Lam$  and suppose $\bl(x) \in \mathcal N ( \lam_0, \eps)$ for all $x \in \Lam$.  Then
		\begin{align*}
		| \log  Z_\Lam(\bl) | &\le C| \Lam| \,.
		\end{align*}
		In particular, $Z_\Lam(\bl)  \ne 0$.   Furthermore, for every $v \in \Lam$, $|\rho_{\bl}(v)| \leq C$.
	\end{theorem}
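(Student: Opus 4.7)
The plan is to reduce the theorem to the following inductive claim: for every $\bl' \in \cA_\bl$ we have $Z_\Lam(\bl') \ne 0$ and $\rho_{\bl'}(v) \in \overline{U}_2$ for every $v \in \R^d$, where $U_1 \subset U_2$ are the complex neighborhoods produced by applying Lemma~\ref{lemContract} to $\lam_0$. Once this claim holds, Lemma~\ref{lemZIntegral} finishes the proof: every truncated activity $\hat{\bl}_x$ belongs to $\cA_\bl$, so $\rho_{\hat{\bl}_x}(x)$ lies in the bounded set $\overline{U}_2$, and the integrand vanishes for $x \notin \Lam$ since then $\hat{\bl}_x(x)=\bl(x)=0$. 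Setting $C = \sup_{z \in \overline{U}_2}|z|$ yields both $|\log Z_\Lam(\bl)| \le C|\Lam|$ and the density bound $|\rho_\bl(v)| \le C$.

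To prove the claim I will use the one-parameter homotopy $\bl_t := t\bl$ for $t \in [0,1]$ and argue that the set $T = \{t \in [0,1] : \text{the claim holds for } \bl_t\}$ is nonempty, closed, and open, hence equal to $[0,1]$. Since $\mathcal{N}(\lam_0,\eps)$ is star-shaped with respect to $0$, scaling preserves values in $\mathcal{N}(\lam_0,\eps)$, and any $\bl' \in \cA_{\bl_t}$ has the form $\alpha t \bl$ for some measurable $\alpha : \R^d \to [0,1]$. Nonemptiness is immediate from $0 \in T$. For closedness, fix $t_n \to t_0$ in $T$ and $\bl' = \alpha t_0 \bl$; the approximants $\bl'_n = \alpha t_n \bl$ lie in $\cA_{\bl_{t_n}}$, so applying Lemma~\ref{lemZIntegral} to each of their sub-activities yields $|Z(\bl'_n)| \ge e^{-C|\Lam|}$ uniformly, which combined with Lemma~\ref{lemContinuous} passes to $|Z(\bl')| \ge e^{-C|\Lam|} > 0$. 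Lemma~\ref{lemRhoContinuous} then carries $\rho_{\bl'_n}(v) \in \overline U_2$ to $\rho_{\bl'}(v) \in \overline U_2$ in the limit.

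For openness, given $t_0 \in T$ and $\bl' = \alpha t_0 \bl$, the recursive identity of Theorem~\ref{thmMainIdentitiy} gives $\rho_{\bl'}(v) = F(\bl'(v), \rho_{\bl'_{v \to \cdot}}(\cdot))$, and since $\bl'_{v \to w} \in \cA_{\bl_{t_0}}$ its density lies in $\overline U_2$ by hypothesis; Lemma~\ref{lemContract} then improves the output to the strictly smaller region $U_1$. Setting $\delta_0 := \mathrm{dist}(\overline U_1, \partial U_2) > 0$ and perturbing $t_0$ to a nearby $t$, the estimate $\|\alpha t \bl - \alpha t_0 \bl\|_\infty \le |t - t_0|\|\bl\|_\infty$ is uniform in $\alpha$, and Lemma~\ref{lemRhoContinuous} shifts $\rho_{\alpha t \bl}(v)$ by less than $\delta_0$, keeping it inside $U_2 \subset \overline U_2$; zero-freeness of the perturbed activities follows from the uniform lower bound on $|Z|$ via Lemma~\ref{lemContinuous}.

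The principal obstacle is making the continuity estimates uniform not only in the point $v$ but also across the entire family $\{\alpha t \bl\}_\alpha$ of sub-activities simultaneously. The resolution exploits the integral representation of $\log Z$ from Lemma~\ref{lemZIntegral}: once densities are uniformly controlled in $\overline U_2$, this identity produces a uniform lower bound on $|Z|$, which feeds back into the proof of Lemma~\ref{lemRhoContinuous} to give a modulus of continuity for $\rho$ that is uniform across $\alpha$. In this way the three density tools of Section~\ref{secDensities} interlock with the complex contraction of Section~\ref{secComplexContraction} to close the induction and deliver $1 \in T$.
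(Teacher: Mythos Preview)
Your proposal is correct and follows essentially the same route as the paper: both arguments fix the neighborhoods $U_1\subset U_2$ from Lemma~\ref{lemContract}, homotope the activity along $t\mapsto t\bl$, use Theorem~\ref{thmMainIdentitiy} together with Lemma~\ref{lemContract} to push densities from $\overline U_2$ into the strictly smaller $U_1$, and then invoke the uniform lower bound on $|Z|$ (via Lemma~\ref{lemZIntegral}) to make the continuity from Lemmas~\ref{lemContinuous}--\ref{lemRhoContinuous} uniform over all sub-activities $\alpha t\bl$. The only cosmetic difference is that the paper phrases the homotopy step as a supremum-plus-contradiction argument while you phrase it as ``$T$ is nonempty, open and closed in $[0,1]$''; your write-up is in fact slightly more explicit than the paper's about why the perturbation works simultaneously for all $\alpha\in[0,1]$.
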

	\begin{proof}
		Fix $\Lam$.  Fix $\lam_0$ and let $\eps >0$ and $U_1, U_2$ be as guaranteed in Lemma~\ref{lemContract}.  Let $C = \max \{ |z| : z\in \overline U_2 \}$.  Fix some $\bl$ so that $\bl(x) \in \mathcal N ( \lam_0, \eps)$ for all $x$.  We will show that for all $\bl ' \in \cA_{\bl}$,
		\begin{equation}
		\label{eqRhoBound}
		\rho_{\bl'}(v) \in  \overline{U}_2  \text{ for all } v \in \Lam \,. 
		\end{equation}

		Towards this end let $\cA_* \subseteq \cA_{\bl}$ be the subset of activity functions $\bl'$ for which~\eqref{eqRhoBound} holds for all $\bl '' \in \cA_{ \bl'}$.  In particular, if $\bl' \in \cA_*$, then $\bl'$ is totally zero-free.  
		
		First we observe that if $\bl' \in \cA_*$, then since  $\hat \bl'_{x} \in \cA_{\bl'}$ (where $\hat \bl'_{x}$ is as in Lemma~\ref{lemZIntegral}), we have  $|\rho_{\hat \bl'_{x}} (x) |  \le C$.  Applying Lemma~\ref{lemZIntegral} then gives that  
		\begin{equation}
		\label{eqlogZCbound}
		| \log Z(\bl') | \le \int_{\Lam}   |\rho_{\hat \bl'_{x}} (x) | \, dx \le C |\Lam| \,.
		\end{equation}
		
		Our goal is now to show that $\cA_* = \cA_{\bl}$, or equivalently that $\bl \in \cA_*$.   The identically $0$ activity function is in $\cA_*$ since $\rho_{0}(v) =0$.  Now suppose for the sake of contradiction that $\cA_* \ne \cA_{\bl}$.  Let $\bl_1$ be an arbitrary activity function in $\cA_{\bl} \setminus \cA_*$ and for $t \in[0,1]$ let $\bl_t = t \bl_1$.  We have $0 = \bl_0 \in \cA_*$, $\bl_1 \notin \cA_*$,  and $\bl_t \in \cA_* \Rightarrow \bl_{t'} \in \cA_*$ if $t' < t$.  We can then define $t^* = \sup \{ t: \bl_t \in \cA_* \}$ and set $\bl_* = \bl_{t^*}$.   By the uniform continuity of $Z( \cdot)$ and $\rho_{\cdot}(x)$ around $\bl \equiv 0$ (Lemmas~\ref{lemContinuous} and~\ref{lemRhoContinuous}), we know that $t^* >0$. 
		
		Now for any $t < t^*$,  $\bl_t \in \cA^*$ and so by~\eqref{eqlogZCbound}, $|\log Z(\bl_t) | \le C |\Lam|$.  Since this is true for all $t < t^*$, by the uniform continuity of $Z( \cdot)$ we have that $\bl_\ast$ is totally zero-free and thus $\rho_\cdot$ is uniformly continuous in a neighborhood of $\bl_*$.  This then shows $\bl_* \in \cA_*$.

		Moreover, by uniform continuity again, we have that $\bl'$ is totally zero-free for all $\bl'$ in a neighborhood of $\bl_*$ in the sup norm and thus $\rho_{\bl'}(v)$ is well defined for all $x$ and all $\bl'$ in this neighborhood. 
		
		Now consider $\rho_{\bl_*}(x)$ for an arbitrary $x \in \Lam$.  We know that for each $w$, $\rho_{(\bl_*)_{v\to w}}(w) \in   \overline{U}_2$ since $(\bl_*)_{v\to w} \in \cA_{ \bl_*}$ and thus $\rho_{(\bl_*){v\to w}} \in \cA_*$.  Then applying Theorem~\ref{thmMainIdentitiy} and Lemma~\ref{lemContract} we see that $\rho_{\bl_*}(v) \in U_1$.   But then by the uniform continuity of $\rho_{\bl_*}(x)$ in a neighborhood of $\bl_*$ and the fact that $\dist(\overline{U}_1, U_2^c) > 0$,  we see that  $\bl' \in \cA_*$ for all $\bl' $ in a neighborhood of $\bl_*$.    In particular, $\bl_{t'} \in \cA_*$ for some $t' > t_*$, a contradiction.  Thus we have that $\cA_* = \cA_{\bl}$ as desired.

		Now since $\bl \in \cA_*$, applying~\eqref{eqlogZCbound} again gives $|\log Z(\bl)| \le C |\Lam|$ as desired.
	\end{proof}

	\section{Proof of Theorem~\ref{thmAnalytic}, Theorem~\ref{corDensity}, and Corollary~\ref{corHSdensity}}
	\label{secFinalProofs}

	To prove analyticity of the pressure and uniqueness of Gibbs measure, we will  use Vitali's Theorem (see, for instance, \cite[Theorem 6.2.8]{simon2015basic}).
	\begin{theorem}[Vitali's Convergence Theorem]\label{thmVitali}
		Let $\Omega \subset \C$ be a complex neighborhood and $f_n$ a sequence of analytic functions on $\Omega$ so that $|f_n(z)| \leq M$ for some $M$ and for all $n$ and all $z \in \Omega$.  If there is a sequence of distinct numbers $z_m \in \Omega$ with $z_m \to z_\infty \in \Omega$ so that $\lim_{n \to \infty} f_n(z_m)$ exists for all $m$, then $f_n$ converges uniformly on compact subsets of $\Omega$ to an analytic function $f$ on $\Omega$.
	\end{theorem}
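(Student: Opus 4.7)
The plan is to deduce Vitali's theorem from two classical pillars of complex analysis: Montel's theorem (which provides a compactness/normality statement for uniformly bounded families of analytic functions) and the identity theorem (which forces analytic functions to coincide once they agree on a set with an accumulation point in the domain).

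First I would invoke Montel's theorem on $\{f_n\}$. Since $|f_n(z)| \leq M$ on $\Omega$, the family is locally uniformly bounded, hence a normal family. Therefore every subsequence of $\{f_n\}$ admits a further subsequence that converges uniformly on compact subsets of $\Omega$ to some function $f$ which, by Weierstrass's theorem on uniform limits of holomorphic functions, is itself analytic on $\Omega$.

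The second step is to show that any two such subsequential limits $f$ and $g$ must coincide. By hypothesis, $\lim_{n \to \infty} f_n(z_m)$ exists for each $m$; since both subsequences extracted from $\{f_n\}$ yield the same pointwise limit at each $z_m$, we have $f(z_m) = g(z_m)$ for all $m$. The sequence $z_m$ accumulates at $z_\infty \in \Omega$, so $f - g$ is an analytic function on $\Omega$ whose zero set has an accumulation point in $\Omega$. Because a complex neighborhood in this paper is defined as a conformal image of the open unit disk and is in particular connected, the identity theorem forces $f \equiv g$ on all of $\Omega$.

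Finally I would conclude that the whole sequence $\{f_n\}$ converges uniformly on compact subsets to this common limit $f$. This is the standard ``unique subsequential limit plus compactness implies convergence'' argument: if the full sequence failed to converge uniformly on some compact $K \subset \Omega$, one could extract a subsequence staying a fixed positive distance (in sup norm on $K$) from $f$, and then apply Montel's theorem to that subsequence to extract a further subsequence converging uniformly on $K$ to an analytic function $\tilde f$; by the previous step $\tilde f = f$, a contradiction. The main subtlety to get right is ensuring the connectedness of $\Omega$ so that the identity theorem applies globally; the paper's convention that a complex neighborhood is simply connected (as a conformal image of the disk) handles this cleanly. The rest is essentially packaging of standard facts, so I expect no genuine obstacle beyond being careful with the domain hypothesis.
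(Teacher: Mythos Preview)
The paper does not prove Vitali's theorem at all; it simply quotes the statement and cites Simon's textbook \cite[Theorem 6.2.8]{simon2015basic} as a reference. So there is no ``paper's own proof'' to compare against.

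Your proposed argument is the standard textbook proof (Montel's normality criterion to get subsequential limits, the identity theorem to force uniqueness of the limit, and the usual compactness argument to upgrade subsequential convergence to full convergence), and it is correct. You are also right to flag the connectedness hypothesis: note that the paper's formal definition of a complex neighborhood is merely ``a bounded open subset of $\C$,'' with the remark that ``in each case here'' the neighborhoods happen to be conformal images of the disk. Strictly speaking the theorem as stated is false for disconnected $\Omega$ (take $\Omega$ a disjoint union of two disks, $f_n \equiv 0$ on the first and $f_n \equiv (-1)^n$ on the second, and all $z_m$ in the first disk), so your insistence on connectedness is exactly the right caveat; the paper's applications are covered because the relevant domains $\mathcal N(\lambda_0,\eps)$ and $L_0$ are manifestly connected.
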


	We will also need some basic properties of boundary conditions, which we will define via locally finite points sets.  
	
	We say a locally finite point set $\bY \subset \R^d$  is \emph{tempered} with respect to a potential $\phi$ if for almost all $v \in \R^d$ we have
	$$ \lim_{n \to \infty} \sum_{y \in \bY \cap \Lambda_n^c}  \phi(v- y) = 0 \,,$$
	 where we recall that $\Lambda_n$ is the ball of volume $n$ centered at $0$.   We will need the Georgii, Nguyen, Zessin (GNZ) equations for Gibbs measures (see, e.g., \cite{dereudre2019introduction} or \cite{jansen2019cluster}): if $\bY$ is a sample of a Gibbs measure $\nu$ associated to activity $\bl$ then for any positive measurable function $F:\R^d \to \R$ we have  
	 \begin{equation}	\label{eq:GNZ}
	 	\E \sum_{y \in \bY} F(y) = \int_{\R^d} \bl(x) F(x)\E e^{-\sum_{y \in \bY}\phi(y - x) } \,dx\,.
	 \end{equation}
	 
	 We first show that almost every sample from a Gibbs measure is tempered.
	\begin{lemma}\label{lem:bc-tempered}
		Let $\nu$ be a Gibbs measure on $\R^d$ associated to an activity $\bl$ bounded by $\lambda < \infty$ and a tempered, repulsive potential $\phi$.  Then $\nu$-almost-every  $\bY$ is tempered.
	\end{lemma}
	\begin{proof}
		By the GNZ equations \eqref{eq:GNZ} note that for each $v$ we have $$\E |1 - e^{-\sum_{x \in \bY \cap \Lambda_n^c}\phi(v - x)}| \leq \E \sum_{x \in \bY \cap \Lambda_n^c} (1 - e^{-\phi(v - x)}) \leq \lambda \int_{\Lambda_n^c} (1 - e^{-\phi(v - x)})\,dx.$$
		Thus, for a bounded set $B$ we have $$\lim_{n\to \infty}\E \int_{B}|1 - e^{-\sum_{x \in \bY \cap \Lambda_n^c}\phi(v - x)}|\,dv  \leq \lambda |B| \lim_{n \to \infty} \int_{\Lambda_n^c} (1 - e^{-\phi(v - x)})\,dx = 0 $$
		by temperedness of $\phi$.  Since the random variables $\int_{B} |1 - e^{-\sum_{x \in \bY \cap \Lambda_n^c} \phi(v - x)}| \,dv$ are monotone decreasing in $n$, we have $\int_{B} |1 - e^{-\sum_{x \in \bY \cap \Lambda_n^c} \phi(v - x)}| \,dv \xrightarrow{n\to\infty} 0$ for almost every $\bY$ for each fixed $B$.  Applying this for $B = \Lambda_m$ for each integer $m$ shows that for almost every $\bY$ we have $1 -e^{-\sum_{x \in \bY \cap \Lambda_n^c} \phi(v - x)} \to 0$ for almost-all $v$; this shows that $\bY$ is tempered.
	\end{proof}

	 We now define boundary conditions in terms of locally finite points sets $\bY$. Given a bounded measurable set $\Lambda$, the Gibbs measure associated to  activity $\bl$ on $\Lambda$ with boundary condition $\bY$  is  the Gibbs  point process on $\Lambda$ with activity $\bl_{\bY,\Lambda}$ defined by $\bl_{\bY,\Lambda}(v) = \bl(v) \exp\left(-\sum_{y \in \bY \cap \Lambda^c} \phi(v-y)  \right)$.  Given a tuple $(v_1,\ldots,v_k) \in (\R^d)^k$, a bounded set $\Lambda$, boundary condition $\bY$, and activity $\bl$, we will let $\rho_{\Lambda,\bl}^{\bY}(v_1,\ldots,v_k) := \rho_{\Lambda, \bl_{\bY,\Lambda}}(v_1,\ldots,v_k)$ denote the $k$-point density under boundary condition $\bY$; define the partition function $Z_\Lambda^{\bY}(\bl)$ analogously.  	
	We next show that uniqueness of infinite-volume Gibbs measure at an activity $\lambda$ follows from uniqueness of one-point densities for all activities below $\lambda$ under tempered boundary conditions.

	\begin{lemma}\label{lem:bc-uniqueness}
		Fix a tempered, repulsive potential $\phi$ and $\lambda > 0$.  Let $\Lambda_n$ denote the ball of volume $n$ centered at $0$ in $\R^d$.   Suppose that for all $\bl$ bounded by $\lambda$ and all tempered boundary conditions $\bY$ and $\bY'$  we have $\lim_{n \to \infty} \rho_{\Lambda_n,\bl}^{\bY}(v) = \lim_{n \to \infty} \rho_{\Lambda_n,\bl}^{\bY'}(v)$.  Then there is a unique infinite volume Gibbs measure with potential $\phi$ at activity $\lambda$. 
	\end{lemma}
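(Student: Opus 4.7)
The plan is to upgrade the hypothesized convergence of one-point densities (which covers all activity functions $\bl$ bounded by $\lambda$) to convergence of all $k$-point correlation functions of the Gibbs measure at activity $\lambda$, and then to appeal to the standard fact that a point process with uniformly bounded correlations is determined by its correlation functions.

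First I would establish a telescoping product formula for $k$-point densities. Setting $\bl_0 = \bl$ and $\bl_j(x) = \bl(x)\exp\bigl(-\sum_{i=1}^j \phi(v_i - x)\bigr)$ for $j = 1, \dots, k-1$, the definitions \eqref{eq1ptDef} and \eqref{eqKpt} combine with the decomposition $e^{-U(v_1,\dots,v_k)} = \prod_{j=2}^k e^{-\sum_{i<j}\phi(v_i - v_j)}$ to give
\[
\rho_{\Lam,\bl}(v_1,\dots,v_k) = \prod_{j=1}^k \rho_{\Lam,\bl_{j-1}}(v_j),
\]
since the ratios of partition functions telescope. This identity remains valid under a boundary condition $\bY$ by absorbing $\bY$ into the activity via $\bl \mapsto \bl_{\bY}$. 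Since $\phi \ge 0$, each $\bl_j$ is pointwise dominated by $\bl$, so the hypothesis of the lemma applies to every factor on the right: $\lim_{n\to\infty} \rho_{\Lambda_n,\bl}^{\bY_n}(v_1,\dots,v_k)$ exists and is independent of the sequence $\bY_n$, for every $k$, every $\bl$ bounded by $\lambda$, and in particular for constant $\bl \equiv \lambda$.

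Second, any infinite-volume Gibbs measure at activity $\lambda$ is a weak subsequential limit of finite-volume Gibbs measures in $\Lambda_n$ for some sequence $\bY_n$ of boundary conditions, and its $k$-point correlation functions are the limits $\lim_n \rho_{\Lambda_n,\lambda}^{\bY_n}(v_1,\dots,v_k)$. By the first step these limits are independent of $\bY_n$, so any two infinite-volume Gibbs measures have identical correlation functions of every order. The product identity together with the uniform bound from Theorem~\ref{thmZeroFreeProof} also yields $|\rho_k(v_1,\dots,v_k)| \le C^k$, which is well within the range where correlation functions uniquely determine a simple point process. Consequently the two measures coincide.

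The main obstacle is the final passage from equality of correlation functions to equality of measures; I would handle it by invoking the moment-problem uniqueness theorem for point processes (as, for instance, in Chapter~4 of Ruelle's book). A subsidiary technical issue is ensuring that weak subsequential limits of $\mu_{\Lambda_n}^{\bY_n}$ actually exist and that their correlation functions are indeed the pointwise limits of the finite-volume correlations; both follow from the uniform bound in Theorem~\ref{thmZeroFreeProof} via standard tightness arguments for locally finite point processes with bounded intensities.
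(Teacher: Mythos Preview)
Your telescoping decomposition of the $k$-point density into a product of one-point densities, and the appeal to correlation functions determining the measure, are exactly the paper's approach. However, your second step contains a gap: it is not generally true that every infinite-volume Gibbs measure is a weak subsequential limit of finite-volume measures with a \emph{deterministic} sequence of boundary conditions $\bY_n$. Extremal Gibbs measures do arise this way, but a non-trivial convex combination of distinct extremal measures typically does not. The paper sidesteps this by invoking the DLR equations directly: for any Gibbs measure $\mu$ and any $n$,
\[
\rho_\mu(v_1,\dots,v_k) \;=\; \E_{\bY \sim \mu|_{\Lambda_n^c}}\bigl[\rho_{\Lambda_n,\lambda}^{\bY}(v_1,\dots,v_k)\bigr],
\]
and then, given two Gibbs measures $\mu_1,\mu_2$, selects for each $n$ boundary conditions $\bY_n,\bY_n'$ with $\rho_{\mu_1}(v_1,\dots,v_k) \le \rho_{\Lambda_n,\lambda}^{\bY_n}(v_1,\dots,v_k)$ and $\rho_{\Lambda_n,\lambda}^{\bY_n'}(v_1,\dots,v_k) \le \rho_{\mu_2}(v_1,\dots,v_k)$. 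Applying your product identity and the hypothesis to each sequence gives $\rho_{\mu_1} \le \rho_{\mu_2}$, and symmetry finishes. Alternatively, you could repair your route by first restricting attention to extremal Gibbs measures (which are $\mu$-a.s.\ limits along the boundary conditions sampled from $\mu$ itself) and then noting that uniqueness among extremal measures forces uniqueness overall.

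A minor point: the bound needed for the moment-problem uniqueness is the Ruelle bound $\rho_\mu(v_1,\dots,v_k) \le \lambda^k$, which holds for any repulsive potential and any $\lambda>0$; invoking Theorem~\ref{thmZeroFreeProof} here is unnecessary and slightly out of place, since the lemma is stated for arbitrary $\lambda$ without the restriction $\lambda < e/C_\phi$.
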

	\begin{proof}
		For an infinite volume Gibbs measure $\mu$ associated to a tempered, repulsive potential $\phi$ at activity $\lambda$, let $\rho_\mu(v_1,\ldots,v_k)$ denote the $k$-point correlation function associated to it.  Since $\phi$ is repulsive, we have the Ruelle bound $\rho_\mu(v_1,\ldots,v_k) \leq \lambda^k$, and so $\mu$ is determined by its collection of $k$-point density functions (see \cite{ruelle1999statistical, jansen2019cluster}).  Thus, it is sufficient to show that $\rho_{\mu_1}(v_1,\ldots,v_k)=\rho_{\mu_2}(v_1,\ldots,v_k)$  for each $k \geq 1$, all $(v_1,\ldots,v_k)$, and each pair of  infinite volume Gibbs measures $\mu_1 ,\mu_2$. 
		
		By the DLR equations (see \cite{ruelle1999statistical,jansen2019cluster}) we have $$
		\rho_\mu(v_1,\ldots,v_k) = \E_{\mu} \rho_{\Lambda_n,\lambda}^{\bY}(v_1,\ldots,v_k)
		$$
		where the expectation is over all samples of $\bY$ from $\mu$.  Since  $\bY$ is tempered with probability $1$, we may take the expectation only over tempered $\bY$.  For a pair of Gibbs measures $\mu_1$ and $\mu_2$ we may apply Fatou's lemma to find tempered boundary conditions $\bY$ and $\bY'$ so that $$\rho_{\mu_1}(v_1,\ldots,v_k) \leq \limsup_{n \to\infty}\rho_{\Lambda_n,\lambda}^{\bY}(v_1,\ldots,v_k), \quad  \liminf_{n\to\infty} \rho_{\Lambda_n,\lambda}^{\bY'}(v_1,\ldots,v_k) \leq \rho_{\mu_2}(v_1,\ldots,v_k)\,.$$
		Note that for the first bound we apply Fatou's Lemma to the sequence of non-negative random variables $\lambda^k - \rho_{\Lambda_n,\lambda}^{\bY}(v_1,\ldots,v_k)$. 		
		 Now write 
		\begin{align*}
		\rho_{\Lam_n, \lam}^{\bY} (v_1, \dots, v_k)  &=\lam^k e^{-U(v_1,\ldots,v_k)}  \frac{Z_{\Lam_n}^{\bY}(\lam e^{-  \sum_{i=1}^k\phi(v_i - \cdot)})}{Z_{\Lam_n}^{\bY}(\lam)}  \\
		&= \lam^k e^{-U(v_1,\ldots,v_k)}  \prod_{j=1}^k  \frac{ Z_{\Lam_n}^{\bY} (\bl_{j+1})}{ Z_{\Lam_n}^{\bY} (\bl_{j})} \\
		&= \prod_{j=1}^k \rho_{\Lam_n, \bl_j}^{\bY}(v_j)
		\end{align*}
		where $\bl_j = \lam e^{-  \sum_{i=1}^{j-1}\phi(v_i - \cdot)}$ (in particular $\bl_1 \equiv \lam$).  Perform the same decomposition for $\bY'$ and note that $\bl_j \leq \lambda$ for each $j$.  The assumption of the lemma then shows that $$\limsup_{n \to \infty} \rho_{\Lambda_n,\lambda}^{\bY}(v_1,\ldots,v_k) = \liminf_{n \to \infty} \rho_{\Lambda_n,\lambda}^{\bY'}(v_1,\ldots,v_k)$$
		and so $\rho_{\mu_1}(v_1,\ldots,v_k) \leq \rho_{\mu_2}(v_1,\ldots,v_k)$. Swapping roles of $\mu_1$ and $\mu_2$ allows us to conclude that $\rho_{\mu_1}(v_1,\ldots,v_k) = \rho_{\mu_2}(v_1,\ldots,v_k)$, thus completing the proof.		
	\end{proof}

We will shortly provide a route to deduce uniqueness of Gibbs measure at a given activity $\lam$ from two ingredients: (1) analyticity of densities for all activity functions $\bl$ bounded by $\lam$ and (2) uniqueness of Gibbs measure for all $\bl$ bounded by some (small) $\lam_0$.  These ingredients will allow us to verify the conditions of Lemma~\ref{lem:bc-uniqueness}.  The first ingredient is provided by  Theorem \ref{thmZeroFreeProof}; the second (with $\lam_0 = 1/(e C_{\phi})$) follows from  convergence of the cluster expansion, for example.  We deduce from this the following lemma.

	\begin{lemma}\label{lem:small-lam-uniqueness}
		Let $\lambda \in [0,1/(eC_\phi))$.  Then for any two tempered boundary conditions $\bY$ and $\bY'$ and any $\bl \leq \lambda$ we have 
		$$\lim_{n \to \infty} \rho_{\Lambda_n,\blambda}^{\bY}(v) = \lim_{n \to \infty} \rho_{\Lambda_n,\blambda}^{\bY'}(v)\,.$$
	\end{lemma}
	\begin{proof}
		Let $\mu_n$ denote the Gibbs measure on $\Lambda_n$ with activity $\bl$ and boundary condition $\bY$; similarly, let $\mu_n'$ denote the Gibbs measure on $\Lambda_n$ with activity $\bl$ and boundary condition $\bY'$.  Then by \eqref{eq1ptDef} we have $\rho_{\Lambda_n,\blambda}^{\bY}(v) = \bl(v) \E_{\mu_n} e^{-H_v(\bX)}$ where $H_v(\bX) = \sum_{x \in \bX} \phi(v - x)$.  For $n > R > 0$ define $H_{v}^R(\bX) = \sum_{x \in \bX \cap \Lambda_R} \phi(v-x)$, $H_v^{R,n}(\bX) = \sum_{x \in \bX \cap \Lambda_R^c \cap \Lambda_n} \phi(v-x)$ and $H_v^{>n}(\bX) =  \sum_{x \in \bX \cap \Lambda_n^c} \phi(v-x).$  Bound $$\E_{\mu_n}[e^{-H_v^R(\bX)} - e^{-H_v(\bX)}]  \leq \E_{\mu_n}[1 - e^{-H_v^{R,n}(\bX)}] + \E_{\mu_n}[1 - e^{-H_v^{>n}(\bX)}]\,.$$ 
		We note that the latter term is equal to $1 - \exp(-\sum_{x \in \bY \cap \Lambda_n^c} \phi(v - x ) )$ which tends to zero as $n \to \infty$ by temperedness of $\bY$.  The former term may be bounded by the GNZ equations \eqref{eq:GNZ} as in the proof of Lemma \ref{lem:bc-tempered} by $$\E_{\mu_n}[1 - e^{-H_v^{R,n}(\bX)}] \leq \lambda \int_{\Lambda_R^c}(1 - e^{-\phi(v - x)})\,dx$$
		which tends to zero as $R \to \infty$ by temperedness.  Applying the same argument for $\mu_n'$, we see that it is sufficient to show that for each fixed $R$ we have that $\lim_{n \to \infty} \E_{\mu_n} e^{-H_v^R(\bX)} = \lim_{n \to \infty} \E_{\mu_n'} e^{-H_v^R(\bX)}$.
		
		Since the function $\bX \mapsto  e^{-H_v^R(\bX)}$ is bounded, measurable, and local, to show convergence it will be sufficient to show that both $\mu_n$ and $\mu_n'$ converge to the same measure in the topology of local convergence (see Appendix B of \cite{jansen2019cluster} for formal definitions of the local topology).  To show convergence, we will show that every subsequence  has a \emph{further} subsequence that converges, and all of these subsequential limits are equal to the unique Gibbs measure with activity $\blambda$.  By Appendix B of \cite{jansen2019cluster}, every subsequence of $\{\mu_n\}$ contains a further subsequence that converges in the topology of local convergence.  Additionally, by \cite[Theorem B1 and Lemmas B1 and B2]{jansen2019cluster}, the subsequential limit is a Gibbs measure with activity $\bl$.  While Appendix B of \cite{jansen2019cluster} is written only for free boundary conditions, the proofs hold for tempered boundary conditions as well.  Since Gibbs measures are unique when $\blambda \leq \lambda < 1/(e C_\phi)$ (e.g.\ \cite[Theorem 2.1]{jansen2019cluster}) we have that every subsequence of $\{\mu_n\}$ has a further subsequence that converges to the unique Gibbs measure $\mu$, i.e.\ $\mu_n$ converges to $\mu$.  Since the same holds for $\mu_n'$, the proof is complete.
	\end{proof}
	We now prove Theorem \ref{thmAnalytic}.
	\begin{proof}[Proof of Theorem \ref{thmAnalytic}]
		We first prove analyticity of the pressure.  By Theorem \ref{thmZeros}, the finite volume pressure $p_V(\lambda) :=\frac{1}{V}\log Z_{\Lam_V}(\lambda) $ is an analytic function of $\lambda$ in  $L_0$ and uniformly bounded.  For $\lam \ge 0$, $p_V(\lam)$ converges as $V \to \infty$ (see, e.g.,\ \cite{ruelle1999statistical}).  Thus, by Vitali's Theorem (Theorem \ref{thmVitali}), $p_V(\lambda)$ converges uniformly to an analytic function in $L_0$.
		
		To prove uniqueness of the infinite volume Gibbs measure, we will use Lemma \ref{lem:bc-uniqueness}, and so we fix activity $\bl$ bounded by $\lambda$ along with tempered boundary conditions $\bY$ and $\bY'$ and a point $v$.  By Theorem~\ref{thmZeroFreeProof}, we may find a complex neighborhood $U$ of $[0,1]$ so that for $z \in U$, the function $z \mapsto \rho_{\Lambda_n,z \bl}^{\bY}(v)$ is well-defined, analytic and bounded in modulus by $C$.  To see that these functions are analytic, note that by \eqref{eq1ptDef}, they are ratios of analytic functions and by Theorem \ref{thmZeroFreeProof} the denominators are non-vanishing in $U$.  Let $f_n(z) = \rho_{\Lambda_n,z \bl}^{\bY}(v)$ and $g_n(z) =\rho_{\Lambda_n,z \bl}^{\bY'}(v)$ for $z \in U$.  For $z \in [0,1/e^2)$ we have $z \bl \in [0,1/(eC_\phi))$ and so by Lemma \ref{lem:small-lam-uniqueness} we have that $\lim_{n \to \infty}g_n(z) = \lim_{n \to \infty} f_n(z)$  for $z \in [0,1/e^2)$.  		
		By Vitali's Theorem, this implies that $f_n - g_n$ converges to an analytic function on $U$; since this limit is identically zero on the interval $[0,1/e^2)$ and analytic in $U$, we must have that the limit is $0$ on all of $U$ by the identity theorem for holomorphic functions.  By Lemma \ref{lem:bc-uniqueness}, this completes the proof of uniqueness.
	\end{proof}

	To prove  Theorem~\ref{corDensity} we use a lower bound on the density as a function of $\lam$ for $\lam$ positive, closely related to an inequality of Lieb~\cite[Eq. (1.19)]{lieb1963new}.  
	Fix the potential $\phi$, and for a region $\Lam$, let $\rho_{\Lam}(\lam) = \lam \frac{d}{d  \lambda}  \frac{1}{|\Lam|} \log Z_{\Lam} (\lam)$ be the finite volume density.
	\begin{lemma}
		\label{lemDensityLB}
		For any repulsive, tempered potential $\phi$, any region $\Lam$, and any $\lam \ge 0$, $\rho_{\Lam}(\lam) \ge \lam/(1+ \lam C_{\phi})$. 
	\end{lemma}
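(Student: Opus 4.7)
The plan is to realize $\rho_\Lambda(\lambda)$ as an average of one-point densities, and then bound $1 - \rho_{\Lambda,\lambda}(v)/\lambda$ pointwise using a Mayer-type comparison.  First, a direct computation with $\frac{d}{d\lambda}Z_\Lambda(\lambda)$ (the constant-activity, real analogue of the argument inside the proof of Lemma~\ref{lemZIntegral}) gives $\rho_\Lambda(\lambda) = |\Lambda|^{-1}\int_\Lambda \rho_{\Lambda,\lambda}(v)\,dv$: adding one integration variable to the partition function yields $\int_\Lambda Z_\Lambda(\lambda e^{-\phi(v-\cdot)})\,dv = \frac{d}{d\lambda}Z_\Lambda(\lambda)$, which after multiplying by $\lambda$ and dividing by $Z_\Lambda(\lambda)$ gives exactly $\int_\Lambda \rho_{\Lambda,\lambda}(v)\,dv = |\Lambda|\rho_\Lambda(\lambda)$.

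Next, using the probabilistic interpretation noted in the remark following~\eqref{eqKpt}, for $\lambda \ge 0$ one has
\[
\rho_{\Lambda,\lambda}(v) \;=\; \lambda\,\E\!\left[e^{-H_v(\bX)}\right],\qquad H_v(\bX) \;:=\; \sum_{x \in \bX}\phi(v-x),
\]
where $\bX$ is the Gibbs point process on $\Lambda$ at activity $\lambda$.  The key elementary inequality is
\[
1 - e^{-\sum_i \phi_i} \;\le\; \sum_i\bigl(1 - e^{-\phi_i}\bigr) \qquad (\phi_i \ge 0),
\]
which is immediate by induction from $1 - ab = (1-a)+a(1-b)\le (1-a)+(1-b)$ for $a,b\in[0,1]$.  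Applying this to $H_v(\bX)$ and then Campbell's formula gives
\[
1 - \frac{\rho_{\Lambda,\lambda}(v)}{\lambda} \;=\; \E\!\left[1 - e^{-H_v(\bX)}\right] \;\le\; \int_\Lambda\bigl(1-e^{-\phi(v-x)}\bigr)\,\rho_{\Lambda,\lambda}(x)\,dx\,.
\]
Averaging over $v\in \Lambda$, swapping the order of integration, and using $\int_\Lambda (1-e^{-\phi(v-x)})\,dv \le C_\phi$ (by translation and positivity of the integrand) yields $1 - \rho_\Lambda(\lambda)/\lambda \le C_\phi\,\rho_\Lambda(\lambda)$, which rearranges to $\rho_\Lambda(\lambda) \ge \lambda/(1+\lambda C_\phi)$.

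The main subtlety is choosing the right pointwise lower bound for $\E[e^{-H_v(\bX)}]$.  A naive Jensen bound $\E[e^{-H_v(\bX)}] \ge e^{-\E H_v(\bX)}$ is useless here because $\E H_v(\bX) = \int \phi(v-x)\rho_{\Lambda,\lambda}(x)\,dx$ can be infinite for tempered potentials (most starkly for hard spheres, where $\phi = +\infty$ on a set of positive measure).  The point of the elementary product inequality above is that it replaces the non-integrable $\phi$ by the integrable Mayer function $1-e^{-\phi}$, whose integral is exactly $C_\phi$; after that, everything is routine.
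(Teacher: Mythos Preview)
Your proof is correct and follows essentially the same approach as the paper's: both express $\rho_\Lambda(\lambda)$ as $\lambda|\Lambda|^{-1}\,\E\int_\Lambda e^{-H_v(\bX)}\,dv$, apply the same elementary inequality $\prod_j(1-x_j)\ge 1-\sum_j x_j$ with $x_j=1-e^{-\phi_j}$, bound the resulting integral of the Mayer function by $C_\phi$, and rearrange the inequality $\rho_\Lambda(\lambda)\ge \lambda - \lambda C_\phi\,\rho_\Lambda(\lambda)$. The only cosmetic difference is that the paper integrates over $v$ inside the expectation (using $\int_\Lambda \sum_{y\in\bX}(1-e^{-\phi(y-v)})\,dv \le |\bX|\,C_\phi$ and $\E|\bX|=|\Lambda|\rho_\Lambda(\lambda)$), while you take the expectation first via Campbell's formula and then average over $v$; the two orderings yield exactly the same bound.
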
 
	The proof below is a generalization of that of~\cite[Lemma 18]{helmuth2020correlation} for the case of hard spheres.
	\begin{proof}
		We denote by $\mathbf X$ the random point set in $\Lam$ drawn from the Gibbs measure $\mu_{\Lam,\lam}$, with density $e^{- U( \cdot)}$ against the Poisson process of intensity $\lam$ on $\Lam$.  The finite volume density is then $\frac{1}{|\Lam|} \E |\mathbf X|$.
		
		We will use a simple consequence of inclusion--exclusion, that for each $n$ and every sequence of numbers $x_1, \dots, x_n$, with $x_j \in [0,1]$ for all $j$, we have 
		$\prod_{j=1}^n (1 - x_j) \geq 1 - \sum_{j=1}^n x_j\,.$
		
		Now since $\rho_{\Lam}(\lam) = \lambda \frac{Z'(\lambda)}{|\Lam| Z(\lambda)}$ we have
		\begin{align*}
		\rho_{\Lam}(\lam)  &= \frac{\lambda}{|\Lam|Z_{\Lam}(\lambda)}\sum_{k \geq 0} \int_{\Lam^{k+1}} \frac{\lambda^{k}}{k!} \prod_{0 \leq i < j \leq k} e^{-\phi(x_i-x_j)}\,dx_1\,\ldots\,dx_k\,dx_0 \\
		&=\frac{\lambda}{|\Lam| Z_{\Lam}(\lambda)}  \int_{\Lam^k} \prod_{1 \leq i < j \leq k} e^{-\phi(x_i - x_j)} \frac{\lambda^k}{k!}  \left( \int_\Lam \prod_{j = 1}^k e^{-\phi(x_j - x_0)}  \,dx_0\right)  \,dx_1\,\ldots\,dx_k \\
		&= \frac{\lambda}{|\Lam|} \E \left[ \int_\Lam \prod_{y \in \bX} e^{-\phi(y - x)}\,dx \right] \\
		&= \frac{\lambda}{|\Lam|} \E \left[ \int_\Lam \prod_{y \in \bX} (1 -(1 -e^{-\phi(y - x)}))\,dx \right] \\
		&\geq \frac{\lambda}{|\Lam|}\E \left[ \int_\Lambda  \left (1 - \sum_{y \in \bX} (1 - e^{-\phi(y-x)}) \right)\,dx \right] \\ 
		&\geq \lambda\left(1 - \E\left[\frac{|\bX|}{|\Lam|} C_{\phi} \right]\right) \\
		&= \lambda - \lambda C_\phi \rho_{\Lam}(\lam) \,.
		\end{align*}
		Rearranging gives the lemma.
	\end{proof}

	With this we can prove Theorem~\ref{corDensity}. 
	\begin{proof}[Proof of Theorem~\ref{corDensity}]
		Fix $\lambda_0 \in [0,e/C_{\phi})$.  Then by Theorem \ref{thmZeroFreeProof}, there is an $\eps > 0$ so that for all  regions $\Lambda$ the pressure $\frac{1}{V}\log Z_{\Lambda_V}(\lambda)$ is analytic and bounded above for $\lambda  \in \mathcal{N}(\lambda_0,\eps)$; by the proof of Theorem \ref{thmAnalytic}, the finite volume pressure converges uniformly on compact subsets of $\mathcal{N}(\lambda_0,\eps)$ and so its derivative with respect to $\lam$ (multiplied by $\lam$) converges uniformly on compact subsets to the limit $\rho_{\phi}(\lambda)$.  In order to show that $p_\phi$ may be taken as an analytic function of $\rho_{\phi}$, it is sufficient to show that $\lambda$ may be taken as an analytic a function of $\rho_{\phi}$.  To show this, we will use the inverse function theorem for analytic functions (e.g.,\ \cite[Theorem $2.1.1$]{simon2015basic}).  Since $\rho_{\phi}$ is an analytic function of $\lambda$, it is sufficient to show that $\rho_{\phi}'(\lambda) \neq 0$ for $\lambda \in [0,\lambda_0]$.   This follows from, e.g., inequality (5) in~\cite{ginibre1967rigorous} which gives a uniform lower bound on the derivative of the finite volume density with respect  to $\lam$.  This shows that for $\lambda \in [0,\lambda_0]$ we have that $\rho_{\phi}'(\lambda) \neq 0$ and so $\rho_{\phi}^{-1}$ is analytic in a complex neighborhood of $[0,\rho_{\phi}(\lambda_0)]$.   By Lemma~\ref{lemDensityLB}, this interval contains the interval $\left [0, \frac{ \lam_0}{1+ \lam_0 C_{\phi}} \right]$.  Sending $\lam_0 \to e/C_{\phi}$ proves analyticity of $p_{\phi}$ in $\rho_{\phi}$ in the interval $\left [0, \frac{e}{1+e} \frac{1}{C_{\phi}} \right)$.
	\end{proof}
	
	To conclude we prove Corollary~\ref{corHSdensity}.
	\begin{proof}[Proof of Corollary~\ref{corHSdensity}]
		The first statement is an immediate consequence of Theorem~\ref{corDensity}.  The second statement follows by replacing the lower bound on the density from Lemma~\ref{lemDensityLB} with the bound, specific to the hard sphere model, from~\cite[Theorem 19]{helmuth2020correlation} (which in turn comes from~\cite{jenssen2019hard}), which says that for $\lam$ fixed, the packing density in dimension $d$ at fugacity $\lam$ is at least $(1+o_d(1)) W(\lam)  2^{-d}$, where $W(\cdot)$ is the W-Lambert function.  In particular, $W(e) =1$, which gives Corollary~\ref{corHSdensity}.
	\end{proof}

	\section*{Acknowledgements}
	We thank Tyler Helmuth,  Sabine Jansen, Steffen Betsch,  G{\"u}nter Last, and the anonymous referees for many helpful comments on the paper.  MM supported in part by NSF grant DMS-2137623. WP supported in part by NSF grants
	DMS-1847451 and CCF-1934915.

\end{document}